   \newcommand\SkipToFmtEnd{}%
   \newcommand\EndFmtInput{}%
   \long\def\SkipToFmtEnd#1\EndFmtInput{}%
\newcommand\ReadOnlyOnce[1]{\@ifundefined{#1}{\@namedef{#1}{}}\SkipToFmtEnd}
\DeclareFontFamily{OT1}{cmtex}{}
\DeclareFontShape{OT1}{cmtex}{m}{n}
  {<5><6><7><8>cmtex8
   <9>cmtex9
   <10><10.95><12><14.4><17.28><20.74><24.88>cmtex10}{}
\DeclareFontShape{OT1}{cmtex}{m}{it}
  {<-> ssub * cmtt/m/it}{}
\DeclareFontShape{OT1}{cmtt}{bx}{n}
  {<5><6><7><8>cmtt8
   <9>cmbtt9
   <10><10.95><12><14.4><17.28><20.74><24.88>cmbtt10}{}
\DeclareFontShape{OT1}{cmtex}{bx}{n}
  {<-> ssub * cmtt/bx/n}{}
\newcommand{\Conid}[1]{\mathit{#1}}
\newcommand{\Varid}[1]{\mathit{#1}}
\newcommand{\anonymous}{\kern0.06em \vbox{\hrule\@width.5em}}
\newcommand{\bind}{\mathbin{>\!\!\!>\mkern-6.7mu=}}
\newcommand{\sequ}{\mathbin{>\!\!\!>}}
\newdimen\mathindent\mathindent\leftmargini}%
\def\resethooks{%
  \global\let\SaveRestoreHook\empty
  \global\let\ColumnHook\empty}
\newcommand*{\savecolumns}[1][default]%
  {\g@addto@macro\SaveRestoreHook{\savecolumns[#1]}}
\newcommand*{\restorecolumns}[1][default]%
  {\g@addto@macro\SaveRestoreHook{\restorecolumns[#1]}}
\newcommand*{\aligncolumn}[2]%
  {\g@addto@macro\ColumnHook{\column{#1}{#2}}}
\newcommand{\onelinecommentchars}{\quad-{}- }
\newcommand{\commentbeginchars}{\enskip\{-}
\newcommand{\commentendchars}{-\}\enskip}
\newcommand{\visiblecomments}{%
  \let\onelinecomment=\onelinecommentchars
  \let\commentbegin=\commentbeginchars
  \let\commentend=\commentendchars}
\newcommand{\invisiblecomments}{%
  \let\onelinecomment=\empty
  \let\commentbegin=\empty
  \let\commentend=\empty}
\newlength{\blanklineskip}
\newcommand{\hsindent}[1]{\quad}
\let\hspre\empty
\let\hspost\empty
\newcommand{\hsnewpar}[1]%
  {{\parskip=0pt\parindent=0pt\par\vskip #1\noindent}}
\newcommand{\hscodestyle}{}
\newcommand{\sethscode}[1]%
  {\expandafter\let\expandafter\hscode\csname #1\endcsname
   \expandafter\let\expandafter\endhscode\csname end#1\endcsname}
   \let\hspre\(\let\hspost\)%
   \let\hspre\(\let\hspost\)%
\newcommand{\plainhs}{\sethscode{plainhscode}}
\def\codeframewidth{\arrayrulewidth}
   \let\endoflinesave=\\
   \framedhslinecorrect\endoflinesave{.5ex}\hline
\newcommand{\framedhslinecorrect}[2]%
  {#1[#2]}
\def\column##1##2{}%
   \newcommand\>[1][]{}\newcommand\<[1][]{}\newcommand\\[1][]{}%
   \def\fromto##1##2##3{##3}%
\let\orighscode=\hscode
   \let\origendhscode=\endhscode
   \def\endhscode{\def\hscode{\endgroup\def\@currenvir{hscode}\\}\begingroup}
\def\hscode{\endgroup\def\@currenvir{hscode}}}%
   \global\let\hscode=\orighscode
   \global\let\endhscode=\origendhscode}%
  \providecommand\BibTeX{{%
    \normalfont B\kern-0.5em{\scshape i\kern-0.25em b}\kern-0.8em\TeX}}}
\begin{document}

\pagestyle{plain}

\title{Equational Reasoning for MTL Type Classes}

\author{H\"armel Nestra}
\email{harmel.nestra@ut.ee}
\orcid{https://orcid.org/0000-0001-7050-7171}
\affiliation{%
  \institution{University of Tartu}
  \institution{Institute of Computer Science}
  \streetaddress{Narva Rd. 18}
  \city{Tartu}
  \postcode{51009}
  \country{Estonia}
}

\begin{abstract}
Ability to use definitions occurring in the code directly 
in equational reasoning is one of the 
key strengths of functional programming. This is impossible in the case of
Haskell type class methods unless a particular instance type is specified,
since class methods can be defined differently for different instances. 
To allow uniform reasoning for all instances, many type classes in the
Haskell library come along with laws (axioms), specified in comments, 
that all instances 
are expected to follow (albeit Haskell is unable to force it). For the type
classes introduced in the Monad Transformer Library (MTL), such
laws have not been specified; nevertheless, some sets of axioms 
have occurred in the literature and the Haskell mailing lists. 
This paper investigates sets of laws usable for equational reasoning about 
methods of the type classes \ensuremath{\Conid{MonadReader}} and \ensuremath{\Conid{MonadWriter}} and also reviews 
analogous earlier proposals for the classes \ensuremath{\Conid{MonadError}} and \ensuremath{\Conid{MonadState}}. 
For both \ensuremath{\Conid{MonadReader}} and \ensuremath{\Conid{MonadWriter}}, an equivalence
result of two alternative axiomatizations in terms of different sets of
operations is established. As a sideline, patterns in the choice of methods of
different classes are noticed which may inspire new developments in MTL.
\end{abstract}

\begin{CCSXML}\begin{hscode}\SaveRestoreHook
\column{B}{@{}>{\hspre}l<{\hspost}@{}}%
\column{E}{@{}>{\hspre}l<{\hspost}@{}}%
\>[B]{}\Varid{ccs2012}\mathbin{>}{}\<[E]%
\\
\>[B]{}\Varid{concept}\mathbin{>}{}\<[E]%
\\
\>[B]{}\Varid{concept\char95 id}\mathbin{>}\mathrm{10011007.10011006}\mathbin{\circ}\mathrm{10011039.10011311}\mathbin{</}\Varid{concept\char95 id}\mathbin{>}{}\<[E]%
\\
\>[B]{}\Varid{concept\char95 desc}\mathbin{>}\Conid{Software}\;\Varid{and}\;\Varid{its}\;\Varid{engineering}\mathord{\sim}\Conid{Semantics}\mathbin{</}\Varid{concept\char95 desc}\mathbin{>}{}\<[E]%
\\
\>[B]{}\Varid{concept\char95 significance}\mathbin{>}\mathrm{300}\mathbin{</}\Varid{concept\char95 significance}\mathbin{>}{}\<[E]%
\\
\>[B]{}\mathbin{/}\Varid{concept}\mathbin{>}{}\<[E]%
\\
\>[B]{}\Varid{concept}\mathbin{>}{}\<[E]%
\\
\>[B]{}\Varid{concept\char95 id}\mathbin{>}\mathrm{10011007.10011006}\mathbin{\circ}\mathrm{10011072}\mathbin{</}\Varid{concept\char95 id}\mathbin{>}{}\<[E]%
\\
\>[B]{}\Varid{concept\char95 desc}\mathbin{>}\Conid{Software}\;\Varid{and}\;\Varid{its}\;\Varid{engineering}\mathord{\sim}\Conid{Software}\;\Varid{libraries}\;\Varid{and}\;\Varid{repositories}\mathbin{</}\Varid{concept\char95 desc}\mathbin{>}{}\<[E]%
\\
\>[B]{}\Varid{concept\char95 significance}\mathbin{>}\mathrm{100}\mathbin{</}\Varid{concept\char95 significance}\mathbin{>}{}\<[E]%
\\
\>[B]{}\mathbin{/}\Varid{concept}\mathbin{>}{}\<[E]%
\\
\>[B]{}\Varid{concept}\mathbin{>}{}\<[E]%
\\
\>[B]{}\Varid{concept\char95 id}\mathbin{>}\mathrm{10011007.10011074}\mathbin{\circ}\mathrm{10011075}\mathbin{</}\Varid{concept\char95 id}\mathbin{>}{}\<[E]%
\\
\>[B]{}\Varid{concept\char95 desc}\mathbin{>}\Conid{Software}\;\Varid{and}\;\Varid{its}\;\Varid{engineering}\mathord{\sim}\Conid{Designing}\;\Varid{software}\mathbin{</}\Varid{concept\char95 desc}\mathbin{>}{}\<[E]%
\\
\>[B]{}\Varid{concept\char95 significance}\mathbin{>}\mathrm{100}\mathbin{</}\Varid{concept\char95 significance}\mathbin{>}{}\<[E]%
\\
\>[B]{}\mathbin{/}\Varid{concept}\mathbin{>}{}\<[E]%
\\
\>[B]{}\Varid{concept}\mathbin{>}{}\<[E]%
\\
\>[B]{}\Varid{concept\char95 id}\mathbin{>}\mathrm{10011007.10011074}\mathbin{\circ}\mathrm{10011099.10011692}\mathbin{</}\Varid{concept\char95 id}\mathbin{>}{}\<[E]%
\\
\>[B]{}\Varid{concept\char95 desc}\mathbin{>}\Conid{Software}\;\Varid{and}\;\Varid{its}\;\Varid{engineering}\mathord{\sim}\Conid{Formal}\;\Varid{software}\;\Varid{verification}\mathbin{</}\Varid{concept\char95 desc}\mathbin{>}{}\<[E]%
\\
\>[B]{}\Varid{concept\char95 significance}\mathbin{>}\mathrm{500}\mathbin{</}\Varid{concept\char95 significance}\mathbin{>}{}\<[E]%
\\
\>[B]{}\mathbin{/}\Varid{concept}\mathbin{>}{}\<[E]%
\\
\>[B]{}\Varid{concept}\mathbin{>}{}\<[E]%
\\
\>[B]{}\Varid{concept\char95 id}\mathbin{>}\mathrm{10003752.10003766}\mathbin{\circ}\mathrm{10003767}\mathbin{</}\Varid{concept\char95 id}\mathbin{>}{}\<[E]%
\\
\>[B]{}\Varid{concept\char95 desc}\mathbin{>}\Conid{Theory}\;\mathbf{of}\;\Varid{computation}\mathord{\sim}\Conid{Formalisms}\mathbin{</}\Varid{concept\char95 desc}\mathbin{>}{}\<[E]%
\\
\>[B]{}\Varid{concept\char95 significance}\mathbin{>}\mathrm{100}\mathbin{</}\Varid{concept\char95 significance}\mathbin{>}{}\<[E]%
\\
\>[B]{}\mathbin{/}\Varid{concept}\mathbin{>}{}\<[E]%
\\
\>[B]{}\Varid{concept}\mathbin{>}{}\<[E]%
\\
\>[B]{}\Varid{concept\char95 id}\mathbin{>}\mathrm{10003752.10003790}\mathbin{\circ}\mathrm{10003798}\mathbin{</}\Varid{concept\char95 id}\mathbin{>}{}\<[E]%
\\
\>[B]{}\Varid{concept\char95 desc}\mathbin{>}\Conid{Theory}\;\mathbf{of}\;\Varid{computation}\mathord{\sim}\Conid{Equational}\;\Varid{logic}\;\Varid{and}\;\Varid{rewriting}\mathbin{</}\Varid{concept\char95 desc}\mathbin{>}{}\<[E]%
\\
\>[B]{}\Varid{concept\char95 significance}\mathbin{>}\mathrm{300}\mathbin{</}\Varid{concept\char95 significance}\mathbin{>}{}\<[E]%
\\
\>[B]{}\mathbin{/}\Varid{concept}\mathbin{>}{}\<[E]%
\\
\>[B]{}\Varid{concept}\mathbin{>}{}\<[E]%
\\
\>[B]{}\Varid{concept\char95 id}\mathbin{>}\mathrm{10003752.10010124}\mathbin{\circ}\mathrm{10010131}\mathbin{</}\Varid{concept\char95 id}\mathbin{>}{}\<[E]%
\\
\>[B]{}\Varid{concept\char95 desc}\mathbin{>}\Conid{Theory}\;\mathbf{of}\;\Varid{computation}\mathord{\sim}\Conid{Program}\;\Varid{semantics}\mathbin{</}\Varid{concept\char95 desc}\mathbin{>}{}\<[E]%
\\
\>[B]{}\Varid{concept\char95 significance}\mathbin{>}\mathrm{500}\mathbin{</}\Varid{concept\char95 significance}\mathbin{>}{}\<[E]%
\\
\>[B]{}\mathbin{/}\Varid{concept}\mathbin{>}{}\<[E]%
\\
\>[B]{}\mathbin{/}\Varid{ccs2012}\mathbin{>}{}\<[E]%
\ColumnHook
\end{hscode}\resethooks
\end{CCSXML}

\ccsdesc[300]{Software and its engineering~Semantics}
\ccsdesc[100]{Software and its engineering~Software libraries and repositories}
\ccsdesc[100]{Software and its engineering~Designing software}
\ccsdesc[500]{Software and its engineering~Formal software verification}
\ccsdesc[100]{Theory of computation~Formalisms}
\ccsdesc[300]{Theory of computation~Equational logic and rewriting}
\ccsdesc[500]{Theory of computation~Program semantics}

\keywords{monad transformers, equational reasoning}

\maketitle

\section{Introduction}\label{intro}

Equational reasoning is often presented as one of the key benefits of 
functional programming. Definitions in the source code 
provide us basic equalities to rely on, and referential 
transparency in pure functional languages like Haskell allows one to 
safely replace terms by equal terms within any context.
Definitional equality works like in mathematics. 

As Haskell type class methods are
defined newly for every instance type, equational reasoning relying on
method definitions in the code is type specific. In order to create 
uniform proofs for all instances of a class, one must use assumptions in the
form of equations, called \emph{axioms} or \emph{laws}, which are 
not grounded on the source code. Many type classes defined in the Haskell
library come along with such axioms which every instance of the class 
is supposed (though not forced) to satisfy. For example, the method
\ensuremath{\Varid{fmap}} of all instances of the 
class \ensuremath{\Conid{Functor}} is supposed to preserve composition and identities, the
methods of the class \ensuremath{\Conid{Monad}} should satisfy the standard monad laws of
category theory, etc.

The class \ensuremath{\Conid{Monad}} offers a uniform interface for effectful computations of
various kinds. The Monad Transformer Library (MTL) which is a part
of GHC standard libraries provides a lot of subclasses of \ensuremath{\Conid{Monad}} 
with specific operation interfaces for different effects. It implements 
a classic approach dating back to 
Liang et al.~\cite{DBLP:conf/popl/LiangHJ95}, 
Jones~\cite{DBLP:conf/afp/Jones95} and Hutton and Meijer~\cite{HuttonM1996}, 
yet its type classes have 
no universally accepted laws for equational reasoning. Gibbons and
Hinze~\cite{DBLP:conf/icfp/GibbonsH11} advocate equational reasoning for
effectful computations, in particular in the case of some MTL classes, but
without analyzing the choice of the laws or studying alternatives.

Recently, laws for MTL type classes have gained some attention in 
research \cite{DBLP:journals/corr/abs-1810-13430,DBLP:conf/ictac/Nestra19,DBLP:conf/mpc/AffeldtNS19} 
and in Haskell Libraries mailing list \cite{Libraries} which 
tells about rising interest in this topic. This paper studies 
several equational axiomatizations of monadic computations with various 
effects---exceptions, environment, logging (writer) and mutable
state. For environment and logging, we propose a few
alternative sets of laws, prove their equivalence, and prove correctness of 
these laws for monads built up using 
the exception, reader, writer and state monad 
transformers. For exceptions and state, we review similar results of
previous work. The axiomatics considered here address one effect
each; axiomatizing of combinations of different effects might be 
an interesting topic of future work.

We avoid a premature conclusion to have found 
\qquot{the right axiomatizations}
for the four type classes. Firstly, validity on four types 
of effects might not provide sufficient evidence for declaring the laws 
universal enough. Secondly, some of our axioms assume operations with
types that are impossible in MTL. Nevertheless, 
such sets of axioms can be useful for proving equivalences of legal
programs. 
Providing a means for writing such proofs is the primary goal of developing
the axiomatizations. Solving problems by going beyond the bounds imposed by 
the problem setting is neither unsound nor original. Quite 
analogously to our case, Hutton and
Fulger~\cite{Hutton08reasoningabout} lift type restrictions imposed by the
context in their proofs of equivalences of effectful programs.

Like in \cite{DBLP:conf/icfp/GibbonsH11}, we ignore partiality that 
may break the laws (see Jeuring et al.~\cite{DBLP:conf/haskell/JeuringJA12}
for proof that the state monad transformer does not preserve monad laws for
bottom cases, and Huffman~\cite{DBLP:conf/icfp/Huffman12} for similar results
for the exception and writer monad transformer). 
Danielsson et~al.~\cite{DBLP:conf/popl/DanielssonHJG06} show that ignoring
partiality in equational reasoning is justified. We use 
notation from category theory throughout the paper mainly for achieving more
concise formulae rather than for generality. Most results are established for
category $\Set$ only.

In Sect.~\ref{prelim}, we help the reader with preliminaries
from category theory; Sect.~\ref{mtl} gives a short introduction to MTL. 
In Sect.~\ref{except}, we review previous work on laws of
exceptions. In Sect.~\ref{reader}, we study axiomatics for equational
reasoning about reader monad operations. Section~\ref{writer} is devoted to
writer monads. We develop an abstract treatment of monads equipped with some
additional operations in general category theoretical setting, which luckily
applies to the writer case. The set of operations most natural for
investigating at the abstract level does not coincide with the actual 
method set specified in MTL but, as the latter is directly
expressible in terms of the former, the results are applicable to Haskell. 
Section~\ref{state} contains a brief review of 
previously proposed axiomatics of stateful computations. Section~\ref{conc}
concludes. 

\section{Preliminaries from Category Theory}\label{prelim}

In mathematics, a \emph{category} consists of a
collection of \emph{objects}, along with a collection of \emph{morphisms}
working between every ordered pair of objects, satisfying the following 
properties:
\begin{itemize}
\item For every object $\obx$, there exists a distinguished morphism
$\id_{\obx}$ from $\obx$ to $\obx$ called the \emph{identity} of $\obx$;
\item For every triple $(\obx,\oby,\obz)$ of objects and morphisms $f$ from
$\obx$ to $\oby$ and $g$ from $\oby$ to $\obz$, there exists a morphism
$g\icp f$ from $\obx$ to $\obz$ called the \emph{composition} of $f$ and $g$;
\item The composition as an operation is associative, and each identity 
morphism $\id_{\obx}$ works as both a left and right unit of composition.
\end{itemize}
The claim that $f$ is a morphism from object $\obx$ to object $\oby$ is
denoted by $f\oftyp\obx\to\oby$, where $\obx\to\oby$ is sometimes called
the \emph{type} of $f$. The object in the subscript of $\id$ can be omitted 
if determined by the context. Assume 
the binding precedence of~$\icp$ being
higher than that of any other binary operator.

It is common to use notions of category theory when talking about Haskell. 
The data types are playing the role of objects and definable functions from
one type to another are the morphisms between these types as objects. 
Identity functions and function composition play the corresponding role. 
Laziness of Haskell, along with the presence of \ensuremath{\Varid{seq}}, does 
not allow all properties of category to be fully satisfied, nor do the standard
categorical constructions in Haskell (some of which are introduced below) 
fully meet their definition given in category theory; however, as shown by 
Danielsson et al.~\cite{DBLP:conf/popl/DanielssonHJG06}, it is justified to 
ignore the deviations for practical purposes. 
We will use both the notation of Haskell and that of category theory
throughout the paper, 
depending on which is more concise and readable at a particular place. 

Pair types \ensuremath{(\Varid{x},\Varid{y})} of Haskell correspond to the binary products
in category theory where they are denoted by cross. In category
theory, a product $\obx\times\oby$ of objects $\obx$ and $\oby$ comes along 
with morphisms
$\exl\oftyp\obx\times\oby\to\obx$, $\exr\oftyp\obx\times\oby\to\oby$ and 
an operator $\both$ mapping every pair of morphisms $f\oftyp\obz\to\obx$
and $g\oftyp\obz\to\oby$ to a morphism 
$f\both g\oftyp\obz\to\obx\times\oby$. Thereby, they must meet the laws
$\exl\icp(f\both g)=f$, $\exr\icp(f\both g)=g$ and 
$\exl\icp h\both\exr\icp h=h$. In Haskell, the projection functions 
\ensuremath{\Varid{fst}} and \ensuremath{\Varid{snd}} stand 
for $\exl$ and $\exr$; for $f\both g$ one can take the function 
that applies both $f$ and $g$ to its argument and returns the results 
as a pair. 

Similarly, the types \ensuremath{\Conid{Either}\;\Varid{x}\;\Varid{y}} of Haskell correspond to binary sums of 
category theory where they are written by plus. A sum $\obx+\oby$ of objects
comes along with morphisms $\inl\oftyp\obx\to\obx+\oby$,
$\inr\oftyp\oby\to\obx+\oby$ and an operator $\either$ that maps every pair
of morphisms $f\oftyp\obx\to\obz$ and $g\oftyp\oby\to\obz$ to a morphism
$f\either g\oftyp\obx+\oby\to\obz$. Thereby, they must satisfy the equations
$(f\either g)\icp\inl=f$, $(f\either g)\icp\inr=g$ and
$h\icp\inl\either h\icp\inr=h$. In Haskell, $\inl$ and $\inr$ are written as
\ensuremath{\Conid{Left}} and \ensuremath{\Conid{Right}}, whereas the intended behavior of $\either$ is 
captured by the library function \ensuremath{\Varid{either}}.

The \ensuremath{\Conid{Functor}} type class in Haskell is introduced by
\begin{hscode}\SaveRestoreHook
\column{B}{@{}>{\hspre}l<{\hspost}@{}}%
\column{5}{@{}>{\hspre}l<{\hspost}@{}}%
\column{E}{@{}>{\hspre}l<{\hspost}@{}}%
\>[B]{}\mathbf{class}\;\Conid{Functor}\;\Varid{f}\;\mathbf{where}{}\<[E]%
\\
\>[B]{}\hsindent{5}{}\<[5]%
\>[5]{}\Varid{fmap}\mathbin{::}(\Varid{a}\to \Varid{b})\to \Varid{f}\;\Varid{a}\to \Varid{f}\;\Varid{b}{}\<[E]%
\\
\>[B]{}\hsindent{5}{}\<[5]%
\>[5]{}\mbox{\commentbegin  ... other stuff omitted ...  \commentend}{}\<[E]%
\ColumnHook
\end{hscode}\resethooks
In category theory, a \emph{functor}~$\frf$ is a structure-preserving
mapping between categories, i.e., a mapping of objects of one category 
to objects of the other category and morphisms
of type $\obx\to\oby$ for any objects $\obx,\oby$ of the first category 
to morphisms of type $\frf\ap\obx\to\frf\ap\oby$ in the second category, 
satisfying the laws $\frf\ap\id=\id$ and 
$\frf\ap(g\icp f)=\frf\ap g\icp\frf\ap f$. The Haskell class method \ensuremath{\Varid{fmap}} 
corresponds to the mapping of morphisms while the mapping of objects is
implemented by the type constructor~\ensuremath{\Varid{f}}. The two functor laws are 
expected to be fulfilled by every instance of the \ensuremath{\Conid{Functor}} type
class.

The library of Haskell defines also the class \ensuremath{\Conid{Bifunctor}} analogous to
\ensuremath{\Conid{Functor}} but applying to binary type constructors:
\begin{hscode}\SaveRestoreHook
\column{B}{@{}>{\hspre}l<{\hspost}@{}}%
\column{5}{@{}>{\hspre}l<{\hspost}@{}}%
\column{E}{@{}>{\hspre}l<{\hspost}@{}}%
\>[B]{}\mathbf{class}\;\Conid{Bifunctor}\;\Varid{p}\;\mathbf{where}{}\<[E]%
\\
\>[B]{}\hsindent{5}{}\<[5]%
\>[5]{}\Varid{bimap}\mathbin{::}(\Varid{a}\to \Varid{b})\to (\Varid{c}\to \Varid{d})\to \Varid{p}\;\Varid{a}\;\Varid{c}\to \Varid{p}\;\Varid{b}\;\Varid{d}{}\<[E]%
\\
\>[B]{}\hsindent{5}{}\<[5]%
\>[5]{}\mbox{\commentbegin  ... other stuff omitted ...  \commentend}{}\<[E]%
\ColumnHook
\end{hscode}\resethooks
The functor laws are assumed to hold for both argument types. In
category theory, bifunctors can be seen as functors whose domain is the direct 
product of two categories. Both the binary product and binary
sum, considered above as operations on objects, can be extended to morphisms 
by defining $f\times g=f\icp\exl\both g\icp\exr$ and 
$f+g=\inl\icp f\either\inr\icp g$, as the result of which one obtains  
bifunctors.

A \emph{transformation} between functors $\frf$ and $\frg$ 
is a family~$\tau$ of morphisms 
$\tau_{\obx}\oftyp\frf\ap\obx\to\frg\ap\obx$ 
for every object $\obx$. In the programming point of view, transformations
are polymorphic functions. 
A transformation is called \emph{natural} if it preserves
structure embedded in the functors, i.e., satisfies for every morphism~$f$ 
the equation
$\tau\icp\frf\ap f=\frg\ap f\icp\tau$. Like in the case of identity
morphisms (which altogether constitute, of course, a natural transformation 
between $\fri$~and~$\fri$ where $\fri$ is the identity functor leaving
everything in place), 
the subscript of~$\tau$ is omitted when it is clear from context.

The \ensuremath{\Conid{Monad}} type class which is a subclass of \ensuremath{\Conid{Functor}} declares methods
\begin{hscode}\SaveRestoreHook
\column{B}{@{}>{\hspre}l<{\hspost}@{}}%
\column{5}{@{}>{\hspre}l<{\hspost}@{}}%
\column{13}{@{}>{\hspre}c<{\hspost}@{}}%
\column{13E}{@{}l@{}}%
\column{17}{@{}>{\hspre}l<{\hspost}@{}}%
\column{30}{@{}>{\hspre}l<{\hspost}@{}}%
\column{36}{@{}>{\hspre}l<{\hspost}@{}}%
\column{E}{@{}>{\hspre}l<{\hspost}@{}}%
\>[5]{}(\bind ){}\<[13]%
\>[13]{}\mathbin{::}{}\<[13E]%
\>[17]{}\Varid{m}\;\Varid{a}\to (\Varid{a}\to \Varid{m}\;\Varid{b}){}\<[36]%
\>[36]{}\to \Varid{m}\;\Varid{b}{}\<[E]%
\\
\>[5]{}(\sequ ){}\<[13]%
\>[13]{}\mathbin{::}{}\<[13E]%
\>[17]{}\Varid{m}\;\Varid{a}\to {}\<[30]%
\>[30]{}\Varid{m}\;\Varid{b}{}\<[36]%
\>[36]{}\to \Varid{m}\;\Varid{b}{}\<[E]%
\\
\>[5]{}\Varid{return}{}\<[13]%
\>[13]{}\mathbin{::}{}\<[13E]%
\>[17]{}\Varid{a}\to \Varid{m}\;\Varid{a}{}\<[E]%
\ColumnHook
\end{hscode}\resethooks
along with the default definition \ensuremath{\Varid{x}\sequ\Varid{k}\mathrel{=}\Varid{x}\bind \lambda \anonymous \to \Varid{k}};
the type variable \ensuremath{\Varid{m}} stands for arbitrary instance type
of \ensuremath{\Conid{Monad}}. 
The operator \ensuremath{\bind } is pronounced \emph{bind}, whereas
\ensuremath{\Varid{return}} is called \emph{unit} of the monad. In category
theory, bind is usually denoted by $\hole^*$ and the order of its arguments is
reversed; so if $\frm$~is a monad then $\hole^*$ maps morphisms of type
$\obx\to\frm\ap\oby$ to morphisms of type $\frm\ap\obx\to\frm\ap\oby$. 
The Haskell operator \ensuremath{\sequ } 
is a special case of bind with constant function as argument.

In category theory, monad operations must satisfy the unit laws
$k^*\icp\invok=k$ and $\invok^*=\id$ along with 
associativity $l^*\icp k^*=(l^*\icp k)^*$, and 
the functor must be expressible via monad operations by 
$\frm\ap f=(\invok\icp f)^*$. The same axioms are expected to be met by all
instances of the \ensuremath{\Conid{Monad}} type class. Category theorists often define monads 
via $\join\oftyp\frm\ap\frm\ap\obx\to\frm\ap\obx$ instead of bind. The join and
bind operations are expressed in terms of each other by $\join=\id^*$ and 
$k^*=\join\icp\frm\ap k$. The axiom set for this approach 
equivalent to the one given above consists of:
\begin{itemize}
\item The two functor laws for $\frm$;
\item The naturality laws $\invok\icp f=\frm\ap f\icp\invok$ and
$\join\icp\frm\ap\frm\ap f=\frm\ap f\icp\join$;
\item The \emph{coherence} axioms 
$\join\icp\frm\ap\join=\join\icp\join$,\break
$\join\icp\frm\ap\invok=\id$ and $\join\icp\invok=\id$.
\end{itemize}
The morphisms $\invok$ and $\join$ are usually denoted by $\eta$ and $\mu$,
respectively. 
\emph{Identity monad} $\fri$ is the simplest monad where the mapping of
objects, mappings of morphisms, monad unit, bind and join are all identities.

\emph{Relative monads on a base functor $\frj$} were introduced by
Altenkirch 
et~al.~\cite{DBLP:conf/fossacs/AltenkirchCU10,DBLP:journals/corr/AltenkirchCU14}. 
Relative monads are pairs of functors $(\frj,\frm)$ equipped with 
bind and unit operations whose types are more general 
than those of the corresponding monad operations: namely, the unit must have 
type $\frj\ap\obx\to\frm\ap\obx$, and bind takes morphisms of type
$\frj\ap\obx\to\frm\ap\oby$ to morphisms of type
$\frm\ap\obx\to\frm\ap\oby$. The functor $\frm$ must be expressible via
these operations by $\frm\ap f=(\invok\icp\frj\ap f)^*$. Other relative
monad laws look the same as the monad laws of $\invok$ and $\hole^*$. 
Monads are relative monads where $\frj=\fri$. Due to
the different types, the alternative representation via $\join$ is
impossible for relative monads in general. 

A \emph{monad morphism} from $\frm$ to $\frm'$ is a structure-preserving
transformation between these monads, i.e., 
$\sigma\oftyp\frm\ap\obx\to\frm'\ap\obx$ such that $\sigma\icp\invok=\invok$
and $\sigma\icp k^*=(\sigma\icp k)^*\icp\sigma$, where $\invok$ and $\hole^*$
in the left-hand sides belong to $\frm$ and those in the right-hand sides
belong to $\frm'$. Analogously, one can define relative monad morphisms.

The term \emph{pointed functor} is sometimes used for denoting
functors $\frf$ equipped with a unit $\invok\oftyp\obx\to\frf\ap\obx$ 
(for any~$\obx$) but without monad bind. The unit is assumed to meet 
the naturality law $\invok\icp f=\frf\ap f\icp\invok$. The term
dates back to at least 
Lenisa et~al.~\cite{DBLP:journals/entcs/LenisaPW00}. 
We prefer to call $\invok$ of a pointed functor its \emph{point} 
rather than unit, since calling something a unit normally assumes a certain 
relationship with another (binary) operation 
(like the unit laws relating unit and bind in the case of monads) 
which is missing in the general case of pointed functors. 
We will occasionally call the unit of a relative monad also \emph{point} 
or \emph{relative point}.

\section{A Brief Introduction to MTL Classes}\label{mtl}

Monads as a framework suitable for embedding computational effects were
discovered and advocated by Moggi~\cite{Moggi1990}. The same work studied 
constructs in category theory that we now know as monad
transformers. Using monad transformers in Haskell was proposed by 
Liang et al.~\cite{DBLP:conf/popl/LiangHJ95}, 
Jones~\cite{DBLP:conf/afp/Jones95} and Hutton and Meijer~\cite{HuttonM1996}. 
The Haskell MTL has been built upon ideas of those papers. It 
defines the exception, reader, writer, and state monad transformers as follows:
\begin{hscode}\SaveRestoreHook
\column{B}{@{}>{\hspre}l<{\hspost}@{}}%
\column{18}{@{}>{\hspre}l<{\hspost}@{}}%
\column{21}{@{}>{\hspre}l<{\hspost}@{}}%
\column{24}{@{}>{\hspre}l<{\hspost}@{}}%
\column{27}{@{}>{\hspre}l<{\hspost}@{}}%
\column{38}{@{}>{\hspre}l<{\hspost}@{}}%
\column{E}{@{}>{\hspre}l<{\hspost}@{}}%
\>[B]{}\mathbf{newtype}\;\Conid{ExceptT}\;{}\<[18]%
\>[18]{}\Varid{e}\;{}\<[21]%
\>[21]{}\Varid{m}\;{}\<[24]%
\>[24]{}\Varid{a}{}\<[27]%
\>[27]{}\mathrel{=}\Conid{ExceptT}\;{}\<[38]%
\>[38]{}(\Varid{m}\;(\Conid{Either}\;\Varid{e}\;\Varid{a})){}\<[E]%
\\
\>[B]{}\mathbf{newtype}\;\Conid{ReaderT}\;{}\<[18]%
\>[18]{}\Varid{r}\;{}\<[21]%
\>[21]{}\Varid{m}\;{}\<[24]%
\>[24]{}\Varid{a}{}\<[27]%
\>[27]{}\mathrel{=}\Conid{ReaderT}\;{}\<[38]%
\>[38]{}(\Varid{r}\to \Varid{m}\;\Varid{a}){}\<[E]%
\\
\>[B]{}\mathbf{newtype}\;\Conid{WriterT}\;{}\<[18]%
\>[18]{}\Varid{w}\;{}\<[21]%
\>[21]{}\Varid{m}\;{}\<[24]%
\>[24]{}\Varid{a}{}\<[27]%
\>[27]{}\mathrel{=}\Conid{WriterT}\;{}\<[38]%
\>[38]{}(\Varid{m}\;(\Varid{a},\Varid{w})){}\<[E]%
\\
\>[B]{}\mathbf{newtype}\;\Conid{StateT}\;{}\<[18]%
\>[18]{}\Varid{s}\;{}\<[21]%
\>[21]{}\Varid{m}\;{}\<[24]%
\>[24]{}\Varid{a}{}\<[27]%
\>[27]{}\mathrel{=}\Conid{StateT}\;{}\<[38]%
\>[38]{}(\Varid{s}\to \Varid{m}\;(\Varid{a},\Varid{s})){}\<[E]%
\ColumnHook
\end{hscode}\resethooks
(There are more transformers but we only study these four.) 
Every transformer assumes a type constructor as its second parameter~\ensuremath{\Varid{m}};
provided that it is a monad, application of the transformer produces a new
monad. Substituting the identity monad for \ensuremath{\Varid{m}}, we obtain
\emph{the error, reader, writer, and state monads}, respectively. 
Using the language of category theory, we can define these monads as
$\Except\ap(\obe,\obx)=\obe+\obx$, $\Reader\ap(\obr,\obx)=\obr\to\obx$, 
$\Writer\ap(\obw,\obx)=\obx\times\obw$, and
$\State\ap(\obs,\obx)=\obs\to\obx\times\obs$. (Still we use the Haskell 
notation $\oba\to\obb$ for function space which deviates from the notations
of exponential objects used in category theory.)

Every transformer adds new effects to the monad while keeping the
previously existing effects in force and available:
\begin{itemize}
\item The exception 
monad transformer adds capability of dealing with errors, where 
errors are values of the type given as the first parameter of the transformer;
\item The state monad transformer enables one to use a hidden 
mutable state for computation, 
where states have type given as the first parameter of the transformer;
\item The reader monad transformer makes it possible to use hidden 
\qquot{environments} of the type given as the first parameter of the
transformer;
\item The writer monad transformer introduces the ability of 
information logging throughout the computation, where the data items 
written into the log
are of the type determined by the first parameter of the transformer.
\end{itemize}
We omit the details of defining the relevant operations and their
propagation
through chains of monad transformer applications; they are not needed for
understanding the paper.

It is convenient to have every effect introduced by 
some monad transformer accessible via a fixed interface irrespectively of
other monad transformers applied. For that reason, MTL defines 
type classes \ensuremath{\Conid{MonadError}}, \ensuremath{\Conid{MonadReader}}, \ensuremath{\Conid{MonadWriter}} and
\ensuremath{\Conid{MonadState}} (and others for effects not considered here). For
instance, \ensuremath{\Conid{MonadError}} defines the interface of exception throwing 
and handling, \ensuremath{\Conid{MonadState}} specifies the interface for stateful 
computation, etc. Each of the following sections
\ref{except}--\ref{state} discusses one of these type classes, 
providing also the definition details.

\section{Laws of Exception Handling}\label{except}

MTL defines the \ensuremath{\Conid{MonadError}} class along with methods 
for throwing and catching of exceptions as follows:
\begin{hscode}\SaveRestoreHook
\column{B}{@{}>{\hspre}l<{\hspost}@{}}%
\column{5}{@{}>{\hspre}l<{\hspost}@{}}%
\column{17}{@{}>{\hspre}l<{\hspost}@{}}%
\column{E}{@{}>{\hspre}l<{\hspost}@{}}%
\>[B]{}\mathbf{class}\;(\Conid{Monad}\;\Varid{m})\Rightarrow \Conid{MonadError}\;\Varid{e}\;\Varid{m}\mid \Varid{m}\to \Varid{e}\;\mathbf{where}{}\<[E]%
\\
\>[B]{}\hsindent{5}{}\<[5]%
\>[5]{}\Varid{throwError}{}\<[17]%
\>[17]{}\mathbin{::}\Varid{e}\to \Varid{m}\;\Varid{a}{}\<[E]%
\\
\>[B]{}\hsindent{5}{}\<[5]%
\>[5]{}\Varid{catchError}{}\<[17]%
\>[17]{}\mathbin{::}\Varid{m}\;\Varid{a}\to (\Varid{e}\to \Varid{m}\;\Varid{a})\to \Varid{m}\;\Varid{a}{}\<[E]%
\ColumnHook
\end{hscode}\resethooks
Here \ensuremath{\Varid{e}} denotes the type of exceptions; it is a parameter of the class but not
of \ensuremath{\Varid{m}}. Gibbons and Hinze~\cite{DBLP:conf/icfp/GibbonsH11} consider axioms of
exception handling in a narrower setting that does not involve 
an exception type (it is equivalent to \ensuremath{\Conid{MonadError}}
with \ensuremath{\Varid{e}\mathrel{=}()}). Their axioms state that catch is associative, whereas failure
(throwing the exception) is its unit and a left zero of~\ensuremath{\sequ } 
as well. Both Malakhovski~\cite{DBLP:journals/corr/abs-1810-13430} and 
the author~\cite{DBLP:conf/ictac/Nestra19} assume a wider setting with 
the type \ensuremath{\Varid{e}} being an additional parameter of the
functor \ensuremath{\Varid{m}}, 
replacing the latter with a bifunctor~$\frf$. The error throwing 
function has type $\obe\to\frf\ap(\obe,\oba)$ and return obtains type
$\oba\to\frf\ap(\obe,\oba)$. 
The catch function in~\cite{DBLP:conf/ictac/Nestra19} has type 
$(\obe\to\frf\ap(\obe',\oba))\to\frf\ap(\obe,\oba)\to\frf\ap(\obe',\oba)$
similarly to bind which has type
$(\oba\to\frf\ap(\obe,\oba'))\to\frf\ap(\obe,\oba)\to\frf\ap(\obe,\oba')$;
in~\cite{DBLP:journals/corr/abs-1810-13430}, the order of arguments of these
functions is reversed.

Following~\cite{DBLP:conf/ictac/Nestra19}, we denote the bind and catch
operations by $\hole\bnd$ and $\hole\cch$, respectively, and denote the error
throwing function by $\rais$. 
That work proposes the following
axioms for bind, among which the first three are standard monad axioms, the
fourth one is a generalization of the usual monad zero law by introducing an
exception parameter, and the last one
establishes that any mapping of exceptions by the bifunctor is a bind
homomorphism:
\lawarray{
&k\bnd\icp\invok
&&\al{=}k
\\
&(\invok
\icp f)\bnd
&&\al{=}\frf\ap(\id,f)
\\
&l\bnd\icp k\bnd
&&\al{=}(l\bnd\icp k)\bnd
\\
&k\bnd\icp\rais
&&\al{=}\rais
\\
&\frf\ap(h,\id)\icp k\bnd
&&\al{=}(\frf\ap(h,\id)\icp k)\bnd\icp\frf\ap(h,\id)
}
For catch, the dual laws are proposed:
\lawarray{
&k\cch\icp\rais
&&\al{=}k
\\
&(\rais
\icp h)\cch
&&\al{=}\frf\ap(h,\id)
\\
&l\cch\icp k\cch
&&\al{=}(l\cch\icp k)\cch
\\
&k\cch\icp\invok
&&\al{=}\invok
\\
&\frf\ap(\id,f)\icp k\cch
&&\al{=}(\frf\ap(\id,f)\icp k)\cch\icp\frf\ap(\id,f)
}
All but the last axiom of both blocks occur also 
in~\cite{DBLP:journals/corr/abs-1810-13430}. 
All axioms introduced so far are meaningful also in the standard MTL setting. 
In addition, \cite{DBLP:conf/ictac/Nestra19} considers the
following law for interchanging bind and catch, where
$\rho=\rais\either\invok$:
\lawarray{
&k\bnd\icp\rho\cch
&&\al{=}(\rais
\either k)\cch\icp(\frf\ap(\inl,\id)\icp k)\bnd
}
This law inherently exploits the two-parameter setting since $\rho\cch$
changes the exception type of the computation.

The following is established by \cite{DBLP:conf/ictac/Nestra19}, where
propagation of throw and catch through applications of 
transformers are assumed to be defined similarly to MTL:
\begin{itemize}
\item Any monad obtained by applying the error monad transformer to another 
monad (and considered as a bifunctor) satisfies all the given laws; 
\item Applying the error, reader, writer, or state monad transformer
preserves all the given laws.
\end{itemize}

The paper~\cite{DBLP:conf/ictac/Nestra19} defines the \emph{joint handle}
$\hole\hdl$ via $\hole\bnd$ and $\hole\cch$ by 
$k\hdl=k\bnd\icp\rho\cch\icp\frf\ap(\inr\icp\inl,\inr)$ and finds an
axiomatics for it that is equivalent to the set of 11 axioms described above.
Namely, it observes that $\rho\oftyp\obe+\oba\to\frf\ap(\obe,\oba)$ and 
$\hole\hdl\oftyp(\obe+\oba\to\frf\ap(\obe',\oba'))\to\frf\ap(\obe,\oba)\to\frf\ap(\obe',\oba')$, 
the types coinciding with those of relative monad unit and bind 
where the sum bifunctor is in the role of the base
functor~$\frj$. The axiomatics contains the relative monad laws
$k\hdl\icp\rho=k$, $\rho\hdl=\id$, and $(\rho\icp(h+f))\hdl=\frf(h,f)$;
the remaining law, associativity $l\hdl\icp k\hdl=(l\hdl\icp k)\hdl$, is not
valid in general, wherefore it is replaced with three special cases which
establish the desired equivalence. For
details, please see \cite{DBLP:conf/ictac/Nestra19}. 

We can classify all operations considered
in~\cite{DBLP:conf/ictac/Nestra19}, including those not discussed here, 
into three levels:
\begin{enumerate}
\item \emph{Point} operations that create structured objects from pure
values: $\rais$, $\invok$, and the joint unit $\rho$. 
(In~\cite{DBLP:conf/ictac/Nestra19}, the joint unit is denoted by~$\eta$. 
We preferred $\rho$ for the sake of uniform notation throughout this paper.)
\item \emph{Mixmap} 
$\phi\oftyp(\obe+\oba\to\obe'+\oba')\to\frf\ap(\obe,\oba)\to\frf\ap(\obe',\oba')$
and its special cases e.g.
$\fuser\oftyp\frf\ap(\obe+\oba,\oba)\to\frf\ap(\obe,\oba)$ and 
$\fusel\oftyp\frf\ap(\obe,\obe+\oba)\to\frf\ap(\obe,\oba)$ given by
equations
$\fusel=\phi(\inl\either\id)$ and $\fuser=\phi(\id\either\inr)$. These 
functions change, without side effects, 
the output value returned or thrown by their argument computation, but are
more general than standard \ensuremath{\Varid{fmap}} and \ensuremath{\Varid{bimap}} as they are able to 
\qquot{mix} the bifunctor arguments $\obe$ and $\oba$. (For example, $\fusel$
moves some errors from the right-hand argument to the left.) MTL defines no 
functions of the mixmap category with \ensuremath{\Conid{MonadError}} class
constraint 
but in the framework of~\cite{DBLP:conf/ictac/Nestra19} they can be expressed 
in terms of existing functions; for instance, 
$\fuser=\rho\cch$ and $\fusel=\rho\bnd$. 
\item \emph{Handle} functions. This level contains functions that
execute effectful computations in sequence, i.e., bind, catch and
the joint handle. This level subsumes the previous level as 
one can define mixmap in its general form in terms of the joint handle by 
$\phi(g)=(\rho\icp g)\hdl$. 
\end{enumerate}

\section{Reader Monads Equationally}\label{reader}

The class \ensuremath{\Conid{MonadReader}} is designed for encoding computations in an implicit 
environment. Computations in an environment can be seen as 
stateful computations with the state being immutable. 
The class is defined in MTL as follows, 
where \ensuremath{\Varid{r}} stands for the type of the environment:
\begin{hscode}\SaveRestoreHook
\column{B}{@{}>{\hspre}l<{\hspost}@{}}%
\column{5}{@{}>{\hspre}l<{\hspost}@{}}%
\column{15}{@{}>{\hspre}c<{\hspost}@{}}%
\column{15E}{@{}l@{}}%
\column{19}{@{}>{\hspre}l<{\hspost}@{}}%
\column{23}{@{}>{\hspre}l<{\hspost}@{}}%
\column{E}{@{}>{\hspre}l<{\hspost}@{}}%
\>[B]{}\mathbf{class}\;\Conid{Monad}\;\Varid{m}\Rightarrow \Conid{MonadReader}\;\Varid{r}\;\Varid{m}\mid \Varid{m}\to \Varid{r}\;\mathbf{where}{}\<[E]%
\\
\>[B]{}\hsindent{5}{}\<[5]%
\>[5]{}\Varid{ask}{}\<[15]%
\>[15]{}\mathbin{::}{}\<[15E]%
\>[19]{}\Varid{m}\;\Varid{r}{}\<[E]%
\\
\>[B]{}\hsindent{5}{}\<[5]%
\>[5]{}\Varid{ask}{}\<[15]%
\>[15]{}\mathrel{=}{}\<[15E]%
\>[19]{}\Varid{reader}\;\Varid{id}{}\<[E]%
\\[\blanklineskip]%
\>[B]{}\hsindent{5}{}\<[5]%
\>[5]{}\Varid{local}{}\<[15]%
\>[15]{}\mathbin{::}{}\<[15E]%
\>[19]{}(\Varid{r}\to \Varid{r})\to \Varid{m}\;\Varid{a}\to \Varid{m}\;\Varid{a}{}\<[E]%
\\[\blanklineskip]%
\>[B]{}\hsindent{5}{}\<[5]%
\>[5]{}\Varid{reader}{}\<[15]%
\>[15]{}\mathbin{::}{}\<[15E]%
\>[19]{}(\Varid{r}\to \Varid{a})\to \Varid{m}\;\Varid{a}{}\<[E]%
\\
\>[B]{}\hsindent{5}{}\<[5]%
\>[5]{}\Varid{reader}\;\Varid{f}{}\<[15]%
\>[15]{}\mathrel{=}{}\<[15E]%
\>[19]{}\mathbf{do}\;{}\<[23]%
\>[23]{}\{\mskip1.5mu \Varid{r}\leftarrow \Varid{ask};\Varid{return}\;(\Varid{f}\;\Varid{r});\mskip1.5mu\}{}\<[E]%
\ColumnHook
\end{hscode}\resethooks

The documentation of the source code specifies that the method \ensuremath{\Varid{ask}}
should return the environment of the computation. We assume that 
\ensuremath{\Varid{ask}} has no side effect (though the documentation leaves it unspecified).
The method \ensuremath{\Varid{reader}} generalizes it by allowing to
apply an arbitrary function to the environment. Up to isomorphism, 
this method embeds the reader monad (defined in Sect.~\ref{mtl}) 
into the monad~\ensuremath{\Varid{m}}. This method is analogous to $\rho$ of 
Sect.~\ref{except} that similarly embeds the exception monad into~\ensuremath{\Varid{m}}. 
In the hierarchy described in Sect.~\ref{except}, both \ensuremath{\Varid{ask}} and \ensuremath{\Varid{reader}} 
belong to the first level as point functions. The method \ensuremath{\Varid{local}} 
sets up a modified environment for a local computation. It is
a functor application by nature 
but the Haskell type does not reflect this because 
the environment type is not a parameter of~\ensuremath{\Varid{m}}. 

We abandon some Haskell function names in favour of mathematical notation. 
Similarly to what we saw in the case of exceptions, we here treat the 
environment type as an extra (first) parameter of the monad and denote 
the obtained bifunctor by $\frf$. Hence $\ensuremath{\Varid{local}}\ap h$ is written as 
$\frf\ap(h,\id)$. Note that $\frf$ is contravariant in its first argument, 
i.e., if $h\oftyp\obr'\to\obr$ then
$\frf\ap(h,\id)\oftyp\frf\ap(\obr,\obx)\to\frf\ap(\obr',\obx)$. 
The method \ensuremath{\Varid{reader}} is denoted by
$\rho\oftyp(\obr\to\obx)\to\frf\ap(\obr,\obx)$. The arrow that
constructs function spaces (like in $\obr\to\obx$) can be made a bifunctor
by defining, for any $h\oftyp\obr'\to\obr$ and $f\oftyp\obx\to\obx'$, 
a new function $h\to f\oftyp(\obr\to\obx)\to(\obr'\to\obx')$ by the equation
$(h\to f)\ap g=f\icp g\icp h$. We will use this notation occasionally 
in this section.

In Subsection~\ref{reader:reader}, we propose an axiomatization of 
computations in environment which directly implies its every model being 
isomorphic to a monad obtained by an application of the reader monad transformer.
It turns out that the exception, reader, writer and state monad transformers
preserve the axioms. This axiomatics uses $\rho$ as a primitive; 
in Subsect.~\ref{reader:ask}, 
we consider an equivalent axiomatics that uses $\ask$ as a primitive and
defines $\rho$ in terms of it.

\subsection{Reduction to Reader Transformer
Applications}\label{reader:reader}

Recall that we assume computations in environment being described by a 
bifunctor~$\frf$ whose first parameter is the environment type and 
second parameter is the type of the return value. The functor laws are
$\frf\ap(\id,\id)=\id$ and 
$\frf\ap(h\icp h',f'\icp f)=\frf\ap(h',f')\icp\frf\ap(h,f)$; note the change
in the order of the composed morphisms in the first argument due to
contravariance. As in the case of exceptions, denote the monad unit and bind
by $\invok$ and $\hole\bnd$, respectively; their types here are 
$\invok\oftyp\obx\to\frf\ap(\obr,\obx)$ and 
$\hole\bnd\oftyp(\obx\to\frf\ap(\obr,\obx'))\to\frf\ap(\obr,\obx)\to\frf\ap(\obr,\obx')$. 

Developing an axiomatics that would imply its models being isomorphic to
reader transformer applications requires introducing operations that have no
counterpart in Haskell MTL. Before doing it, consider the laws to be required 
that are expressible in terms of standard operations. Firstly, changing
the environment by $\frf\ap(h,\id)$ being a monad morphism:
\lawarray{
&\frf\ap(h,\id)\icp\invok
&&\al{=}\invok
\label{reader:bifun-unithom}\sctag{Bifun-UnitHom}\\
&\frf\ap(h,\id)\icp k\bnd
&&\al{=}(\frf\ap(h,\id)\icp k)\bnd\icp\frf\ap(h,\id)
\label{reader:bifun-bndhom}\sctag{Bifun-BndHom}
}
Next, $\rho$ being a monad morphism from the underlying 
reader monad $\obr\to\_$ to the monad $\frf\ap(\obr,\_)$:
\lawarray{
&\rho\icp\const
&&\al{=}\invok
\label{reader:rdr-unithom}\sctag{Rdr-UnitHom}\\
&\rho\icp k^*
&&\al{=}(\rho\icp k)\bnd\icp\rho
\label{reader:rdr-bndhom}\sctag{Rdr-BndHom}
}
Here, $\hole^*$ denotes the bind operation of the underlying reader monad; 
note that $\const$ is its unit. And lastly, $\rho$ being a natural 
transformation between the power bifunctor and~$\frf$:
\lawarray{
&\rho\icp(h\to f)
&&\al{=}\frf\ap(h,f)\icp\rho
\label{reader:rdr-nat}\sctag{Rdr-Nat}
}

Although valid in all monads considered in this paper, these axioms are not
as powerful as we could do by widening our point of view. 
The underlying assumption of our approach is that 
types of the form $\frf\ap(\obr,\obx)$ encode, in some way, 
environment-dependent monadic computations. The dependency does not have to 
occur in the form of a function whose argument type is~$\obr$,
because applying, for instance, the state monad transformer with state 
type~$\obs$ to a member of \ensuremath{\Conid{MonadReader}} with the environment
type~$\obr$ produces functions with argument type~$\obs$ (i.e., not $\obr$) 
inheriting the dependency on~$\obr$ from the member of \ensuremath{\Conid{MonadReader}}. 
Therefore, we introduce functions $\apply$ and $\abstr$ 
establishing an isomorphism between types $\frf\ap(\obr,\obx)$ and
$\obr\to\frm\ap\obx$ for a monad~$\frm$. More precisely:
\lawarray{
&\apply
&&\al{\oftyp}\frf\ap(\obr,\obx)\to\obr\to\frm\ap\obx\\
&\abstr
&&\al{\oftyp}(\obr\to\frm\ap\obx)\to\frf\ap(\obr,\obx)
}
So a computation of the form
$\applyof t\ap r$ fixes the environment of the 
environment-dependent computation~$t$ to be $r$, which intuitively is 
an application of a hidden function, 
and $\abstrof f$ \qquot{abstracts} the parameter of
its argument function~$f$ by hiding it into the functor computation. 

Table~\ref{reader:def} presents precise definitions of the exception,
reader, writer and state monad transformers (denoted by $\ExceptTof{\obe}$,
$\ReaderTof{\obq}$, $\WriterTof{\obw}$ and $\StateTof{\obs}$, respectively)
along with the corresponding bifunctor transformers (which are denoted
similarly) and 
specifies propagation of $\rho$, $\apply$ and $\abstr$ through the 
transformers; we omit definitions of morphism mappings of the functors and 
monad operations as they are standard. (In the defining equations for
$\rho$, $\apply$ and $\abstr$, the occurrences of these operations in the
l.h.s. are those of the bifunctor constructed by the transformer 
while the occurrences in the r.h.s. belong to the original bifunctor~$\frf$.)  
By sequential application of these transformers in all possible orders, 
we achieve a set of infinitely many structures each either 
having $\frf$, $\frm$ and the related operations defined in
terms of those of a simpler member structure of this set or being a base
case, for which we can take $\frf\ap(\obr,\obx)=\obr\to\frm\ap\obx$ with 
$\rho=\id\to\invok$ and both $\apply$ and $\abstr$ defined as identities. 
Note that propagation of~$\rho$ 
is for all transformers defined via composing from the left with the lift operation 
of the particular transformer; this matches the definition of \ensuremath{\Varid{reader}} in MTL
in the case of exception, writer, and state monad transformer. 

\begin{table*}
\caption{Definitions of functions $\rho$, $\apply$ and $\abstr$ for
different monad transformers.}\label{reader:def}
\[
\begin{array}{c}
\begin{array}{rcl}
\multicolumn{3}{@{}l}{\mbox{Exception transformer:}}\\
\ExceptTof{\obe}\frm\ap\obx&=&\frm\ap(\obe+\obx)\\
\ExceptTof{\obe}\frf\ap(\obr,\obx)&=&\frf\ap(\obr,\obe+\obx)\\
\rho&=&\frf\ap(\id,\inr)\icp\rho\\
\applyof t\ap r&=&\applyof t\ap r\\
\abstrof f&=&\abstrof f
\end{array}\;\;\;
\begin{array}{rcl}
\multicolumn{3}{@{}l}{\mbox{Reader transformer:}}\\
\ReaderTof{\obq}\frm\ap\obx&=&\obq\to\frm\ap\obx\\
\ReaderTof{\obq}\frf\ap(\obr,\obx)&=&\obq\to\frf\ap(\obr,\obx)\\
\rho&=&\const\icp\rho\\
\applyof t\ap r&=&\lam{q}{\applyof(t\ap q)\ap r}\\
\abstrof f&=&\lam{q}{\abstrof(\lam{r}{f\ap r\ap q})}
\end{array}\;
\begin{array}{rcl}
\multicolumn{3}{@{}l}{\mbox{Writer transformer:}}\\
\WriterTof{\obw}\frm\ap\obx&=&\frm\ap(\obx\times\obw)\\
\WriterTof{\obw}\frf\ap(\obr,\obx)&=&\frf\ap(\obr,\obx\times\obw)\\
\rho&=&\frf\ap(\id,\id\both\constof\one)\icp\rho\\
\applyof t\ap r&=&\applyof t\ap r\\
\abstrof f&=&\abstrof f
\end{array}\\\\
\begin{array}{rcl}
\multicolumn{3}{@{}l}{\mbox{State transformer:}}\\
\StateTof{\obs}\frm\ap\obx&=&\obs\to\frm\ap(\obx\times\obs)\\
\StateTof{\obs}\frf\ap(\obr,\obx)&=&\obs\to\frf\ap(\obr,\obx\times\obs)\\
\rho&=&(\lam{ts}{\frf\ap(\id,\id\both\constof s)\ap t})\icp\rho\\
\applyof t\ap r&=&\lam{s}{\applyof(t\ap s)\ap r}\\
\abstrof f&=&\lam{s}{\abstrof(\lam{r}{f\ap r\ap s})}
\end{array}\\\hline
\end{array}
\]
\end{table*}

Denote the unit and bind of~$\frm$ by
$\invok$ and $\hole\bnd$ like those of $\frf\ap(\obr,\_)$. 
We specify $\apply$ equationally as a function translating 
the operations of~$\frf$ to operations of~$\frm$:
\lawarray{
&\apply\icp\frf\ap(h,f)
&&\al{=}(h\to\frm\ap f)\icp\apply
\label{reader:apply-nat}\sctag{App-Nat}\\
&\apply\icp\invok
&&\al{=}\const\icp\invok
\label{reader:apply-unithom}\sctag{App-UnitHom}\\
&\apply\icp k\bnd
&&\al{=}(\apply\icp k)\lbnd\icp\apply
\label{reader:apply-bndhom}\sctag{App-BndHom}\\
&\apply\icp\rho
&&\al{=}\id\to\invok
\label{reader:apply-rdr}\sctag{App-Rdr}
}
Here, $\hole\lbnd$ in the r.h.s. of \ref{reader:apply-bndhom} 
denotes the monad bind of~$\frm$ 
lifted to functions. Note that $\const\icp\invok$
in \ref{reader:apply-unithom} similarly lifts $\invok$ to functions. 
We also assume that $\apply$ and $\abstr$ are inverses of each
other:
\lawarray{
&\apply\icp\abstr
&&\al{=}\id
\label{reader:apply-abstr}\sctag{App-Abs}\\
&\abstr\icp\apply
&&\al{=}\id
\label{reader:abstr-apply}\sctag{Abs-App}
}
As a consequence, proving properties of $\frf$ and $\rho$ are reduced to
proving properties of monad $\frm$. The following theorems hold in $\Set$:

\begin{theorem}\label{reader:conc}
Let $\frm$ be a monad. Let $\frf$ be a type-preserving, contravariant in its
first argument, binary mapping of objects and morphisms. Let $\rho$,
$\apply$, $\abstr$ have their right types and meet axioms
\ref{reader:apply-nat}, \ref{reader:apply-unithom},
\ref{reader:apply-bndhom}, \ref{reader:apply-rdr}, \ref{reader:apply-abstr}
and \ref{reader:abstr-apply}. Then $\frf$ meets the functor laws, its
left section for every type $\obr$ is a monad w.r.t. $\invok$ and
$\hole\bnd$, and the equations \ref{reader:bifun-unithom},
\ref{reader:bifun-bndhom}, \ref{reader:rdr-unithom}, \ref{reader:rdr-bndhom}
and \ref{reader:rdr-nat} are all valid.
\end{theorem}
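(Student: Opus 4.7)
The plan is to exploit axioms \ref{reader:apply-abstr} and \ref{reader:abstr-apply}, which make $\apply$ a bijection with inverse $\abstr$. In particular, for any pair of morphisms $u,v$ into $\frf\ap(\obr,\obx)$ one has $u=v$ iff $\apply\icp u=\apply\icp v$, the non-trivial direction being immediate by composing on the left with $\abstr$. Hence every equation asserted by the theorem can be reduced, by pre-composing both sides with $\apply$, to an equation between morphisms into $\obr\to\frm\ap\obx$; on that side the remaining axioms \ref{reader:apply-nat}, \ref{reader:apply-unithom}, \ref{reader:apply-bndhom} and \ref{reader:apply-rdr} act as rewrite rules, translating occurrences of $\frf\ap(h,f)$, $\invok$, $\hole\bnd$ and $\rho$ respectively into the corresponding constructions on the function space $\obr\to\frm\ap\obx$.

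Concretely, the two functor laws for $\frf$ reduce via \ref{reader:apply-nat} to the identities $(\id\to\frm\ap\id)=\id$ and $((h\icp h')\to\frm\ap(f'\icp f))=(h'\to\frm\ap f')\icp(h\to\frm\ap f)$, both of which are immediate from the bifunctor structure of the arrow combined with the functor laws of $\frm$. The two unit laws and associativity for each left section $\frf\ap(\obr,\_)$ translate, via \ref{reader:apply-unithom} and \ref{reader:apply-bndhom}, to pointwise lifts of the corresponding monad laws of $\frm$, and hold by elementwise verification; the compatibility identity $\frf\ap(\id,f)=(\invok\icp f)\bnd$ required of a monadic functor in the sense of Sect.~\ref{prelim} follows the same way, with the analogous relation on $\frm$ as its only extra ingredient. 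Laws \ref{reader:bifun-unithom} and \ref{reader:bifun-bndhom} rest respectively on $(h\to\id)\icp\const=\const$ and on the fact that the lifted bind commutes with $(h\to\id)$ in the expected way. Finally, the three laws of $\rho$ collapse, after \ref{reader:apply-rdr} rewrites $\apply\icp\rho$ as $\id\to\invok$, to elementary identities whose essential content is the naturality of $\invok$ in $\frm$ and the defining equation $k^*\ap g\ap r=k\ap(g\ap r)\ap r$ of the reader-monad bind.

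The main obstacle will be bookkeeping: several distinct binds (that of $\frf\ap(\obr,\_)$, that of $\frm$, its pointwise lift to function-valued morphisms, and the reader-monad bind $\hole^*$) and several distinct units (those of $\frf\ap(\obr,\_)$ and $\frm$ together with $\const$) coexist throughout the calculations. Each individual verification is mechanical --- a handful of axiom rewrites followed by a short pointwise computation --- but keeping careful track of which operation lives in which structure is essential, and the contravariance of $\frf$ in its first argument is a further source of subtle order-of-composition errors.
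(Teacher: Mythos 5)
Your proposal is correct and follows exactly the route the paper intends: the text immediately preceding the theorem states that \ref{reader:apply-abstr} and \ref{reader:abstr-apply} reduce properties of $\frf$ and $\rho$ to properties of the monad $\frm$, and the paper then declares the proof straightforward. Your outline — cancelling $\apply$ via its left inverse $\abstr$, rewriting with \ref{reader:apply-nat}--\ref{reader:apply-rdr}, and verifying the translated identities pointwise in $\Set$ — is precisely that reduction, correctly carried out.
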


\begin{theorem}\label{reader:main}
Let $\frm$ be a monad. Then: 
\begin{itemize}
\item The bifunctor obtained by applying the reader monad
transformer with an environment type~$\obr$ to $\frm$, with 
$\apply$ and $\abstr$ defined as identities and other operations 
defined like in the Haskell MTL,
satisfies the axioms \ref{reader:apply-nat},
\ref{reader:apply-unithom}, \ref{reader:apply-bndhom}, \ref{reader:apply-rdr}, 
\ref{reader:apply-abstr}
and \ref{reader:abstr-apply}; 
\item Applying the exception, reader, writer and state bifunctor (and monad)
transformers 
preserve these axioms.
\end{itemize}
\end{theorem}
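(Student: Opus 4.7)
The proof splits into two parts matching the two bullet points. For the first bullet, I would unfold the base case $\frf\ap(\obr,\obx)=\obr\to\frm\ap\obx$ with $\rho=\id\to\invok$ and $\apply=\abstr=\id$ described in the paragraph preceding Table~\ref{reader:def}. With $\apply=\id$, axioms \ref{reader:apply-nat}, \ref{reader:apply-unithom}, \ref{reader:apply-bndhom} and \ref{reader:apply-rdr} collapse respectively to the standard definitions of the reader transformer's action on morphisms (namely $\frf\ap(h,f)=h\to\frm\ap f$), its unit ($\invok_{\frf}=\const\icp\invok_{\frm}$), its bind ($k\bnd=k^{\lbnd}$), and the defining equation for $\rho$. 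The two inverse axioms \ref{reader:apply-abstr} and \ref{reader:abstr-apply} hold trivially. So the first bullet reduces to writing out the reader-transformer definitions and reading off that they exactly match the axioms.

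For the second bullet, I would verify, transformer by transformer, that if the original $\frf,\frm,\rho,\apply,\abstr$ satisfy the six axioms, then the transformed versions $\frf',\frm',\rho',\apply',\abstr'$ defined in Table~\ref{reader:def} do too. This is a $4\times 6$ case analysis. In each subcase I would unfold the Table~\ref{reader:def} definitions of the primed operations, then simplify using the assumed axioms for the unprimed operations together with the standard functor/monad laws of the transformer in question (the coproduct functor for exceptions, the function-space functor for readers, the product-with-monoid for writers, and the state-threading combinator for states). The two inverse axioms reduce in every case to the same inverse axioms for the unprimed operations, since $\apply'$ and $\abstr'$ are always built by composing the unprimed ones with a pair of $\lambda$-abstractions that cancel on composition.

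The main obstacle will be the state transformer case for \ref{reader:apply-bndhom}, and to a lesser extent \ref{reader:apply-nat}. Here the primed bind must thread a state $s$ through a pair of type $\obx\times\obs$, and $\apply'$ is defined using a further $\lambda$-abstraction over $s$; verifying $\apply'\icp k\bnd=(\apply'\icp k)^{\lbnd'}\icp\apply'$ therefore requires aligning the state threading of $\StateTof{\obs}\frm$ with the inner lifted bind $(\apply\icp k)^{\lbnd}$ of $\frm$. The calculation will invoke \ref{reader:apply-bndhom} for $\frf$ applied to a continuation of the form $\frf\ap(\id,\id\both\constof s)\icp k$, together with naturality \ref{reader:apply-nat} for the morphism $\id\both\constof s$, and conclude by $\eta$-expansion in $s$. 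The writer case for the same axiom is analogous but shorter, since no fresh $\lambda$-variable is introduced; and the exception case is essentially immediate because the transformer only post-composes the return object with the coproduct $\obe+\_$, so the axioms lift through $\frf\ap(\id,\inr)$ without disturbing the reader structure.
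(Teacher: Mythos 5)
Your proposal is correct and follows exactly the route the paper intends: the paper itself omits the argument with the remark ``Proofs are straightforward,'' and the straightforward proof is precisely your direct verification --- the base case collapses the six axioms to the reader-transformer definitions, and the second bullet is the transformer-by-transformer unfolding of Table~\ref{reader:def}. Your identification of the state-transformer instance of \ref{reader:apply-bndhom} as the only mildly laborious case is also accurate.
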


Proofs are straightforward.

\subsection{Axioms of \ensuremath{\Varid{ask}}}\label{reader:ask}

The definition of class \ensuremath{\Conid{MonadReader}} provides mutual definitions of
\ensuremath{\Varid{reader}} and \ensuremath{\Varid{ask}}, looking in our language as follows:
\lawarray{
&\ask
&&\al{=}\rho(\id)
\label{reader:ask-rdr}\sctag{Ask-Rdr}\\
&\rho(f)
&&\al{=}\frf\ap(\id,f)\ap\ask
\label{reader:rdr-ask}\sctag{Rdr-Ask}
}
In order to axiomatize $\ask$ instead of $\rho$, we replace
\ref{reader:apply-rdr} with
\lawarray{
&\applyof\ask
&&\al{=}\invok
\label{reader:apply-ask}\sctag{App-Ask}
}
That is, asking of the environment as a computation of type
$\frf\ap(\obr,\obr)$, when presented as a function of type
$\obr\to\frm\ap\obr$, equals the monad unit of $\frm$ that immediately 
returns the argument environment.

This gives an equivalent axiomatics indeed, as established by the following 
theorem:

\begin{theorem}\label{reader:askthm}
Let $\frm$ be a monad. Let $\frf$ be a type-preserving, contravariant in its
first argument, binary mapping of objects and morphisms. Let $\rho$,
$\apply$, $\abstr$ have their right types and meet axioms
\ref{reader:apply-nat}, \ref{reader:apply-unithom},
\ref{reader:apply-bndhom}, \ref{reader:apply-abstr} 
and \ref{reader:abstr-apply}. Then the set of equations 
$\set{\ref{reader:apply-rdr},\ref{reader:ask-rdr}}$ 
is equivalent to the set of equations 
$\set{\ref{reader:apply-ask},\ref{reader:rdr-ask}}$.
\end{theorem}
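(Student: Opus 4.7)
The plan is to prove the equivalence by establishing both implications, each time using only the five assumed base axioms together with the monad laws of~$\frm$ and the defining equation $(h\to f)\ap g=f\icp g\icp h$. The enabling facts are that $\apply$ is a bijection by \ref{reader:apply-abstr} and \ref{reader:abstr-apply}, and that \ref{reader:apply-nat} transports the $\frf$-action into the function-space bifunctor; together these allow identities on the $\frf$-side to be checked after applying $\apply$, then appealing to monadic naturality on the $\frm$-side.

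For the forward direction, I would assume \ref{reader:apply-rdr} and \ref{reader:ask-rdr}. Substituting $\ask=\rho(\id)$ from \ref{reader:ask-rdr} and then evaluating \ref{reader:apply-rdr} pointwise at~$\id$ gives $\applyof\ask=(\apply\icp\rho)(\id)=(\id\to\invok)(\id)=\invok$, which is \ref{reader:apply-ask}. For \ref{reader:rdr-ask}, I would first invoke Theorem~\ref{reader:conc} to pick up the naturality law \ref{reader:rdr-nat}, and then, given~$f$, compute $\frf\ap(\id,f)\ap\ask=(\frf\ap(\id,f)\icp\rho)(\id)=\rho((\id\to f)(\id))=\rho(f)$, using in order \ref{reader:ask-rdr}, \ref{reader:rdr-nat}, and the definition of $\id\to f$.

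For the reverse direction, I would assume \ref{reader:apply-ask} and \ref{reader:rdr-ask}. Setting $f=\id$ in \ref{reader:rdr-ask} yields $\rho(\id)=\frf\ap(\id,\id)\ap\ask$, so \ref{reader:ask-rdr} requires the functor identity $\frf\ap(\id,\id)=\id$. This is not an explicit hypothesis, but it follows from \ref{reader:apply-nat} specialised to $h=f=\id$ combined with \ref{reader:abstr-apply}; hence \ref{reader:ask-rdr}. For \ref{reader:apply-rdr}, I would evaluate the composite pointwise at an arbitrary~$f$: $(\apply\icp\rho)(f)=(\apply\icp\frf\ap(\id,f))(\ask)=((\id\to\frm\ap f)\icp\apply)(\ask)=(\id\to\frm\ap f)(\invok)=\frm\ap f\icp\invok=\invok\icp f=(\id\to\invok)(f)$, chaining \ref{reader:rdr-ask}, \ref{reader:apply-nat}, \ref{reader:apply-ask}, and the naturality of the unit of~$\frm$.

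The main obstacle is not depth but dependency bookkeeping: \ref{reader:rdr-nat} is available in the forward direction only via Theorem~\ref{reader:conc}, which itself takes \ref{reader:apply-rdr} as hypothesis, while in the reverse direction the functor identity $\frf\ap(\id,\id)=\id$ must be re-established from \ref{reader:apply-nat} alone, without any recourse to \ref{reader:apply-rdr}. Once these small observations are in place, each of the four derivations collapses into a handful of routine rewriting steps.
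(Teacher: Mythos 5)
Your proposal is correct and follows essentially the same route as the paper's proof: the same four pointwise rewriting chains, with \ref{reader:rdr-nat} imported via Theorem~\ref{reader:conc} in the forward direction and the functor identity $\frf\ap(\id,\id)=\id$ re-derived from \ref{reader:apply-nat} and \ref{reader:abstr-apply} (independently of \ref{reader:apply-rdr}) in the reverse direction. The paper states these dependency caveats in exactly the same way before its calculation, so there is nothing to add.
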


\begin{proof}
Straightforward but we present it fully 
for illustrating equational reasoning in
our axiomatics. If we assume \ref{reader:apply-rdr}, the premises of
Theorem~\ref{reader:conc} are fully met which allows us to use 
\ref{reader:rdr-nat} in the proof of \ref{reader:rdr-ask}. 
In the proof of \ref{reader:ask-rdr}, we can use functor laws since their
proof does not need \ref{reader:apply-rdr}. 

\begin{itemize}
\item
$\set{\ref{reader:apply-rdr},\ref{reader:ask-rdr}}\Longrightarrow\set{\ref{reader:apply-ask},\ref{reader:rdr-ask}}$:\vspace{0.5ex}

\begin{center}
\begin{minipage}{0.4\columnwidth}
\ref{reader:apply-ask}:
\prfarray{
&\applyof\ask\\
\al{=}&\since{\ref{reader:ask-rdr}}\\
&\applyof(\rho(\id))\\
\al{=}&\since{\ref{reader:apply-rdr}}\\
&(\id\to\invok)\ap\id\\
\al{=}&\since{power, identity}\\
&\invok
}
\end{minipage}\qquad
\begin{minipage}{0.4\columnwidth}
\ref{reader:rdr-ask}:
\prfarray{
&\frf\ap(\id,f)\ap\ask\\
\al{=}&\since{\ref{reader:ask-rdr}}\\
&\frf\ap(\id,f)\ap(\rho(\id))\\
\al{=}&\since{\ref{reader:rdr-nat}}\\
&\rho((\id\to f)\ap\id)\\
\al{=}&\since{power, identity}\\
&\rho(f)
}
\end{minipage}
\end{center}

\item
$\set{\ref{reader:apply-ask},\ref{reader:rdr-ask}}\Longrightarrow\set{\ref{reader:apply-rdr},\ref{reader:ask-rdr}}$:

\begin{center}
\begin{minipage}[t]{0.45\columnwidth}
\ref{reader:apply-rdr}:
\prfarray{
&\applyof(\rho(f))\\
\al{=}&\since{\ref{reader:rdr-ask}}\\
&\applyof(\frf\ap(\id,f)\ap\ask)\\
\al{=}&\since{\ref{reader:apply-nat}}\\
&(\id\!\!\to\!\!\frm\ap f)\ap(\applyof\ask)\\
\al{=}&\since{\ref{reader:apply-ask}}\\
&(\id\!\!\to\!\!\frm\ap f)\ap\invok\\
\al{=}&\since{power, identity}\\
&\frm\ap f\icp\invok\\
\al{=}&\since{naturality}\\
&\invok\icp f\\
\al{=}&\since{power, identity}\\
&(\id\to\invok)\ap f
}
\end{minipage}\quad
\begin{minipage}[t]{0.4\columnwidth}
\ref{reader:ask-rdr}:
\prfarray{
&\rho(\id)\\
\al{=}&\since{\ref{reader:rdr-ask}}\\
&\frf\ap(\id,\id)\ap\ask\\
\al{=}&\since{functor, identity}\\
&\ask
}
\end{minipage}
\end{center}
\end{itemize}

\end{proof}

\section{An Abstract View and its Application to \ensuremath{\Conid{MonadWriter}}}\label{writer}

The class \ensuremath{\Conid{MonadWriter}} is defined in MTL as follows:
\begin{hscode}\SaveRestoreHook
\column{B}{@{}>{\hspre}l<{\hspost}@{}}%
\column{5}{@{}>{\hspre}l<{\hspost}@{}}%
\column{13}{@{}>{\hspre}l<{\hspost}@{}}%
\column{23}{@{}>{\hspre}l<{\hspost}@{}}%
\column{28}{@{}>{\hspre}c<{\hspost}@{}}%
\column{28E}{@{}l@{}}%
\column{32}{@{}>{\hspre}l<{\hspost}@{}}%
\column{E}{@{}>{\hspre}l<{\hspost}@{}}%
\>[B]{}\mathbf{class}\;(\Conid{Monoid}\;\Varid{w},\Conid{Monad}\;\Varid{m}){}\<[28]%
\>[28]{}\Rightarrow {}\<[28E]%
\>[32]{}\Conid{MonadWriter}\;\Varid{w}\;\Varid{m}{}\<[E]%
\\
\>[28]{}\mid {}\<[28E]%
\>[32]{}\Varid{m}\to \Varid{w}\;\mathbf{where}{}\<[E]%
\\
\>[B]{}\hsindent{5}{}\<[5]%
\>[5]{}\Varid{writer}{}\<[13]%
\>[13]{}\mathbin{::}(\Varid{a},\Varid{w})\to \Varid{m}\;\Varid{a}{}\<[E]%
\\
\>[B]{}\hsindent{5}{}\<[5]%
\>[5]{}\Varid{tell}{}\<[13]%
\>[13]{}\mathbin{::}\Varid{w}\to \Varid{m}\;(){}\<[E]%
\\
\>[B]{}\hsindent{5}{}\<[5]%
\>[5]{}\Varid{listen}{}\<[13]%
\>[13]{}\mathbin{::}\Varid{m}\;\Varid{a}\to \Varid{m}\;(\Varid{a},\Varid{w}){}\<[E]%
\\
\>[B]{}\hsindent{5}{}\<[5]%
\>[5]{}\Varid{pass}{}\<[13]%
\>[13]{}\mathbin{::}\Varid{m}\;(\Varid{a},\Varid{w}\to \Varid{w})\to \Varid{m}\;\Varid{a}{}\<[E]%
\\[\blanklineskip]%
\>[B]{}\hsindent{5}{}\<[5]%
\>[5]{}\Varid{writer}{}\<[13]%
\>[13]{}\mathord{\sim}(\Varid{a},\Varid{w}){}\<[23]%
\>[23]{}\mathrel{=}\mathbf{do}\;\{\mskip1.5mu \Varid{tell}\;\Varid{w};\Varid{return}\;\Varid{a};\mskip1.5mu\}{}\<[E]%
\\
\>[B]{}\hsindent{5}{}\<[5]%
\>[5]{}\Varid{tell}\;{}\<[13]%
\>[13]{}\Varid{w}{}\<[23]%
\>[23]{}\mathrel{=}\Varid{writer}\;((),\Varid{w}){}\<[E]%
\ColumnHook
\end{hscode}\resethooks
The method \ensuremath{\Varid{writer}}, similarly to the method \ensuremath{\Varid{reader}} in the class 
\ensuremath{\Conid{MonadReader}}, embeds the writer monad into the monad~\ensuremath{\Varid{m}}. 
The method \ensuremath{\Varid{tell}} logs the value given as argument and immediately 
returns the value~\ensuremath{()}. It is a special case of \ensuremath{\Varid{writer}}. Both belong to the
first level in the hierarchy of Sect.~\ref{except}.

The method \ensuremath{\Varid{listen}} applies to a monadic computation and copies the whole
log of this computation into its return value. The method 
\ensuremath{\Varid{pass}} applies to a monadic computation, the return value of which
contains a function, and modifies the log of this computation by applying
this function. Note that both \ensuremath{\Varid{listen}} and \ensuremath{\Varid{pass}}
modify a monadic computation in a way that can be encoded as a
transformation of values of the writer monad, i.e., in the form of
a function of type $\obx\times\obw\to\obx'\times\obw$: for \ensuremath{\Varid{listen}},
the corresponding function is $\lam{(a,w)}{((a,w),w)}$, and for \ensuremath{\Varid{pass}}, the function
is $\lam{((a,f),w)}{(a,f\ap w)}$. According to
Sect.~\ref{except}, these methods are mixmaps and belong to the second level
of the method hierarchy. 
If the class \ensuremath{\Conid{MonadWriter}} contained a general method 
\ensuremath{\Varid{mixmap}\mathbin{::}((\Varid{a},\Varid{w})\to (\Varid{a'},\Varid{w}))\to \Varid{m}\;\Varid{a}\to \Varid{m}\;\Varid{a'}}, we could define
\ensuremath{\Varid{listen}} and \ensuremath{\Varid{pass}} as
\begin{hscode}\SaveRestoreHook
\column{B}{@{}>{\hspre}l<{\hspost}@{}}%
\column{9}{@{}>{\hspre}l<{\hspost}@{}}%
\column{E}{@{}>{\hspre}l<{\hspost}@{}}%
\>[B]{}\Varid{listen}{}\<[9]%
\>[9]{}\mathrel{=}\Varid{mixmap}\;(\lambda (\Varid{a},\Varid{w})\to ((\Varid{a},\Varid{w}),\Varid{w})){}\<[E]%
\\
\>[B]{}\Varid{pass}{}\<[9]%
\>[9]{}\mathrel{=}\Varid{mixmap}\;(\lambda ((\Varid{a},\Varid{f}),\Varid{w})\to (\Varid{a},\Varid{f}\;\Varid{w})){}\<[E]%
\ColumnHook
\end{hscode}\resethooks
On the other hand, 
\ensuremath{\Varid{mixmap}} can be defined in terms of \ensuremath{\Varid{listen}} and \ensuremath{\Varid{pass}} by
\begin{hscode}\SaveRestoreHook
\column{B}{@{}>{\hspre}l<{\hspost}@{}}%
\column{E}{@{}>{\hspre}l<{\hspost}@{}}%
\>[B]{}\Varid{mixmap}\;\Varid{g}\mathrel{=}\Varid{pass}\mathbin{\circ}\Varid{fmap}\;(\Varid{bimap}\;\Varid{id}\;\Varid{const}\mathbin{\circ}\Varid{g})\mathbin{\circ}\Varid{listen}{}\<[E]%
\ColumnHook
\end{hscode}\resethooks

Unlike in the case of exceptions and environments, we do not consider the
monoid parameter of the class as an extra parameter of the monad. 
Keeping the monoid fixed enables us to derive the structure of the
\ensuremath{\Conid{WriterMonad}} class methods as some kind of reflection of 
the structure of the underlying writer monad. This is not to say that 
generalizing the approach by letting the monoid vary would be pointless. 
In the rest of this section, we denote the monad under consideration by~$\frm$
but also use the bifunctor notation occasionally in
Subsections~\ref{writer:bind} and~\ref{writer:other} (with fixed monoid).

\subsection{Pointed Functors with Mixmap}\label{writer:mixmap}

Analogously to the case of exceptions, 
denote the relative point and mixmap by $\rho$ and $\phi$, respectively. 
The following set of equations is valid for all monads
of the class \ensuremath{\Conid{MonadWriter}} that have been constructed by sequential
application of the exception, reader, writer and state monad transformers:
\lawarray{
&\rho\icp f
&&\al{=}\phi(f)\icp\rho
\label{writer:rpoint-rnat}\sctag{RPoint-RNat}\\
&\phi(\id)
&&\al{=}\id
\label{writer:rfun-id}\sctag{Mixmap-Id}\\
&\phi(g\icp f)
&&\al{=}\phi(g)\icp\phi(f)
\label{writer:rfun-comp}\sctag{Mixmap-Comp}
}
The first equation is a \qquot{relative naturality} 
law that uses mixmap as one of the two functors that $\rho$ is working
between (the other one is identity). The other two laws
establish preservation of identity and composition. The last
one is particularly interesting because, as shown
in~\cite{DBLP:conf/ictac/Nestra19}, it does not hold in the
axiomatics of exceptions we saw in Sect.~\ref{except}. 
In this sense, \ensuremath{\Conid{MonadWriter}} behaves more nicely than \ensuremath{\Conid{MonadError}}.

For finding properties of \ensuremath{\Varid{tell}}, 
\ensuremath{\Varid{listen}} and \ensuremath{\Varid{pass}}, one can rely on 
the mixmap laws and the previously seen definitions of these methods 
in terms of mixmap.
We will not search for an equivalent axiomatics using these methods as
primitives but we define two simpler functions and find an
equivalent axiomatics for them. To this end, note that \ensuremath{\Varid{listen}}
and \ensuremath{\Varid{pass}} serve dual purposes in the sense that \ensuremath{\Varid{listen}} copies
the log into the return value while \ensuremath{\Varid{pass}} moves information from the
return value to the log. Instead of \ensuremath{\Varid{listen}} and \ensuremath{\Varid{pass}}, we consider
$\shift\oftyp\frm\ap\obx\to\frm\ap(\obx\times\obw)$ and 
$\fuse\oftyp\frm\ap(\obx\times\obw)\to\frm\ap\obx$, defined by equations
$\shift=\phi(\lam{(a,w)}{((a,w),\one)})$ and 
$\fuse=\phi(\lam{((a,w'),w)}{(a,w\cdot w')})$, that achieve the 
same aims in a cleaner way. 
The function $\shift$ copies the log into the return value but, unlike
\ensuremath{\Varid{listen}}, replaces the original with the monoid unit. The function
$\fuse$ uses the monoid multiplication to join a monoid element in the return
value with the current log. In terms of $\shift$ and $\fuse$, the general 
mixmap can be expressed as
\lawarray{
&\phi(g)
&&\al{=}\fuse\icp\frm\ap g\icp\shift
\label{writer:rfun-fuseshift}\sctag{Mixmap-FuseShift}
}

Note that $\lam{p}{(p,\one)}$ and 
$\lam{((a,w'),w)}{(a,w\cdot w')}$ are the unit and join, respectively, of
the writer monad. Denoting the unit and join by
$\eta$ and $\mu$, respectively, we can abstract from the underlying monad and
specify $\shift$ and $\fuse$ by
\lawarray{
&\shift
&&\al{=}\phi(\eta)
\label{writer:shift-rfun}\sctag{Shift-Mixmap}\\
&\fuse
&&\al{=}\phi(\mu)
\label{writer:fuse-rfun}\sctag{Fuse-Mixmap}
}
The types in the abstract view are 
$\shift\oftyp\frm\ap\obx\to\frm\ap(\frj\ap\obx)$ and
$\fuse\oftyp\frm\ap(\frj\ap\obx)\to\frm\ap\obx$ where 
$\frj$ denotes the underlying monad. 
The mapping of morphisms by $\frm$ can be given via~$\phi$:
\lawarray{
&\frm\ap f
&&\al{=}\phi(\frj\ap f)
\label{writer:fun-rfun}\sctag{Fun-Mixmap}
}
Assuming this as definition, one
can prove the two functor laws for $\frm$ using \ref{writer:rfun-id},
\ref{writer:rfun-comp} and the functor laws for $\frj$. Moreover, the unit
of $\frm$ can be expressed by
\lawarray{
&\invok
&&\al{=}\rho\icp\eta
\label{writer:point-rpoint}\sctag{Point-RPoint}
}
Proof of naturality of $\invok$ is straightforward 
using naturality of $\eta$ along with \ref{writer:rpoint-rnat} and
\ref{writer:fun-rfun}.

We are now going to form an alternative axiom set that uses 
$\shift$ and $\fuse$ instead of $\phi$ as the underlying operations. 
Firstly, we include the following laws that resemble 
the coherence conditions of monad join:
\lawarray{
&\fuse\icp\frm\ap\mu
&&\al{=}\fuse\icp\fuse
\label{writer:fuse-fuse}\sctag{Fuse-Fuse}\\
&\fuse\icp\frm\ap\eta
&&\al{=}\id
\label{writer:fuse-funpoint}\sctag{Fuse-FunPoint}\\
&\fuse\icp\shift
&&\al{=}\id
\label{writer:fuse-shift}\sctag{Fuse-Shift}
}
Secondly, we include the following two homomorphism laws for $\shift$:
\lawarray{
&\shift\icp\rho
&&\al{=}\invok
\label{writer:shift-rpointhom}\sctag{Shift-PointHom}\\
&\shift\icp\phi(g)
&&\al{=}\frm\ap g\icp\shift
\label{writer:shift-rfunhom}\sctag{Shift-RNat}
}
Note that \ref{writer:shift-rpointhom} and \ref{writer:shift-rfunhom}
uniquely determine $\rho$ and $\phi$. Hence there is no need to include
direct definitions of $\rho$ and $\phi$; but if we did it, $\rho$ would be
given by
\lawarray{
&\rho
&&\al{=}\fuse\icp\invok
\label{writer:rpoint-point}\sctag{RPoint-Point}
}
and $\phi$ by \ref{writer:rfun-fuseshift} (for proof, compose
\ref{writer:shift-rpointhom} and \ref{writer:shift-rfunhom} with $\fuse$ from 
the left and apply \ref{writer:fuse-shift}). Then we could replace 
\ref{writer:shift-rpointhom} and \ref{writer:shift-rfunhom} with 
axioms not mentioning $\rho$ and $\phi$ to obtain an 
axiomatization expressed fully in terms of $\shift$ and $\fuse$. We prefer 
the homomorphism laws for brevity and elegance. 

Altogether, we have the
following result that establishes equivalence of the axiomatics of~$\phi$ and 
the axiomatics of $\shift$ and $\fuse$ in every category:

\begin{theorem}\label{writer:level2}
Let $(\frj,\eta,\mu)$ be a monad. Suppose that $\frm$ consists of 
a mapping of objects to objects and a type-preserving mapping of 
morphisms to morphisms. (This is to say that $\frm$ is an endofunctor
without assuming functor laws.) Furthermore, assume transformations 
$\rho\oftyp\frj\ap\obx\to\frm\ap\obx$, $\invok\oftyp\obx\to\frm\ap\obx$, 
$\shift\oftyp\frm\ap\obx\to\frm\ap(\frj\ap\obx)$ and
$\fuse\oftyp\frm\ap(\frj\ap\obx)\to\frm\ap\obx$ along with $\phi$ that maps 
morphisms of type $\frj\ap\obx\to\frj\ap\obx'$ to morphisms of type
$\frm\ap\obx\to\frm\ap\obx'$ being given. Then the set of laws consisting of
\ref{writer:rpoint-rnat}, \ref{writer:rfun-id}, \ref{writer:rfun-comp},
\ref{writer:shift-rfun}, \ref{writer:fuse-rfun}, \ref{writer:fun-rfun} and 
\ref{writer:point-rpoint} is equivalent to the set of laws consisting of:
\begin{itemize}
\item Two functor laws for $\frm$;
\item Naturality of $\invok$ and $\fuse$;
\item The coherence laws \ref{writer:fuse-fuse}, \ref{writer:fuse-funpoint},
and \ref{writer:fuse-shift};
\item The homomorphism laws \ref{writer:shift-rpointhom} and
\ref{writer:shift-rfunhom}.
\end{itemize}
\end{theorem}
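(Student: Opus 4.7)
The plan is to prove the equivalence by establishing the two implications separately, observing that in each direction the operations distinguished by the other axiomatization can be expressed in terms of the given primitives.

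For the forward direction, I would substitute the definitions \ref{writer:shift-rfun}, \ref{writer:fuse-rfun}, \ref{writer:fun-rfun}, and \ref{writer:point-rpoint} into each target law and reduce everything to calculations with $\phi$. The functor laws for $\frm$ then follow from the functor laws for $\frj$ combined with \ref{writer:rfun-id} and \ref{writer:rfun-comp}; naturality of $\invok$ follows from naturality of $\eta$ together with \ref{writer:rpoint-rnat}; and naturality of $\fuse$ and the coherence laws \ref{writer:fuse-fuse}, \ref{writer:fuse-funpoint}, \ref{writer:fuse-shift} reduce to the corresponding monad laws for $(\frj,\eta,\mu)$ via \ref{writer:rfun-comp}. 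The homomorphism law \ref{writer:shift-rpointhom} becomes $\phi(\eta)\icp\rho=\rho\icp\eta=\invok$ after applying \ref{writer:rpoint-rnat} and \ref{writer:point-rpoint}; and \ref{writer:shift-rfunhom} rewrites as $\phi(\eta\icp g)=\phi(\frj\ap g\icp\eta)$ by naturality of $\eta$, which then factors into $\phi(\frj\ap g)\icp\phi(\eta)=\frm\ap g\icp\shift$ by \ref{writer:rfun-comp}.

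For the reverse direction, I would first derive two useful consequences of the second axiomatics: the explicit formula $\phi(g)=\fuse\icp\frm\ap g\icp\shift$, obtained by precomposing \ref{writer:shift-rfunhom} with $\fuse$ and applying \ref{writer:fuse-shift}, and $\rho=\fuse\icp\invok$, obtained analogously from \ref{writer:shift-rpointhom}. These give explicit expressions for $\phi$ and $\rho$ in terms of $\shift$, $\fuse$, $\invok$, and the action of $\frm$ on morphisms. The laws \ref{writer:rfun-id}, \ref{writer:shift-rfun}, \ref{writer:fuse-rfun}, \ref{writer:fun-rfun}, and \ref{writer:point-rpoint} then each follow by a short calculation using the coherence laws together with naturality of $\fuse$ and $\invok$; for example, $\phi(\frj\ap f)=\fuse\icp\frm\ap(\frj\ap f)\icp\shift=\frm\ap f\icp\fuse\icp\shift=\frm\ap f$ via naturality of $\fuse$ followed by \ref{writer:fuse-shift}. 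The relative naturality law \ref{writer:rpoint-rnat} is a similarly short chain using naturality of $\invok$ and \ref{writer:shift-rpointhom}.

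The main obstacle in the reverse direction is proving \ref{writer:rfun-comp}: expanding both sides via the derived formula for $\phi$ leaves an awkward occurrence of $\shift\icp\fuse$ in the middle of one side, which the second axiomatics does not provide to be the identity (only $\fuse\icp\shift=\id$ is available). The trick is to substitute the derived expression $\phi(f)=\fuse\icp\frm\ap f\icp\shift$ back into \ref{writer:shift-rfunhom}, yielding the identity $\shift\icp\fuse\icp\frm\ap f\icp\shift=\frm\ap f\icp\shift$; this allows the intervening $\shift\icp\fuse$ to be eliminated without needing any invertibility of $\shift$. Once this identity is in hand, \ref{writer:rfun-comp} drops out after one application of the $\frm$-functor law $\frm\ap(g\icp f)=\frm\ap g\icp\frm\ap f$.
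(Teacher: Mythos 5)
Your proposal is correct and follows essentially the route the paper intends: the paper itself only remarks that the proofs are straightforward, but the surrounding text supplies exactly the derived formulas you use ($\rho=\fuse\icp\invok$ and $\phi(g)=\fuse\icp\frm\ap g\icp\shift$, obtained by left-composing the homomorphism laws with $\fuse$ and applying \ref{writer:fuse-shift}). Your handling of the one genuinely non-routine step --- eliminating the inner $\shift\icp\fuse$ in the proof of \ref{writer:rfun-comp} via the instance $\shift\icp\fuse\icp\frm\ap f\icp\shift=\frm\ap f\icp\shift$ of \ref{writer:shift-rfunhom} rather than any invertibility of $\shift$ --- is exactly right.
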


The proofs are straightforward.

We have not mentioned naturality of $\rho$ and $\shift$; both are
implied by the axioms considered. One can also deduce the following 
dual of \ref{writer:fuse-fuse} law from the axioms:
\lawarray{
&\frm\ap\eta\icp\shift
&&\al{=}\shift\icp\shift
\label{writer:shift-shift}\sctag{Shift-Shift}
}

\subsection{Two-Story Monads}\label{writer:bind}

So far, bind operation of $\frm$ was not involved into our study. At the
third level of the hierarchy defined in Sect.~\ref{except} for exceptions, 
we also had a \qquot{joint handle} function of type
$(\obe+\oba\to\frf\ap(\obe',\oba'))\to\frf\ap(\obe,\oba)\to\frf\ap(\obe',\oba')$
generalizing the monad bind; we noted that its type equals the type of bind of 
a relative monad on $+$ but it unfortunately does not meet all relative
monad laws. One can also define a similar function 
of type $(\obx\times\obw\to\frm\ap\obx')\to\obm\ap\obx\to\obm\ap\obx'$ 
that takes the current log along with the return value into account when
binding two computations with writer effects. All
relative monad laws turn out to be
satisfied for all structures constructed via
the four monad transformers we consider in this paper. 
Therefore we start axiomatizing of the third level from the relative monad
laws (with, again, $\frj$ replacing the writer monad):
\lawarray{
&k\hdl\icp\rho
&&\al{=}k
\label{writer:hdl-unitl}\sctag{RBnd-UnitL}\\
&\rho\hdl
&&\al{=}\id
\label{writer:hdl-id}\sctag{RBnd-Id}\\
&l\hdl\icp k\hdl
&&\al{=}(l\hdl\icp k)\hdl
\label{writer:hdl-assoc}\sctag{RBnd-Assoc}\\
&\frm\ap f
&&\al{=}(\rho\icp\frj\ap f)\hdl
\label{writer:fun-hdl}\sctag{Fun-RBnd}
}
In order to be able to express the usual monad bind of type 
$(\obx\to\frm\ap\obx')\to\frm\ap\obx\to\frm\ap\obx'$ 
in terms of the relative monad bind whose type in the case of 
an abstract base functor~$\frj$ is 
$(\frj\ap\obx\to\frm\ap\obx')\to\frm\ap\obx\to\frm\ap\obx'$, we consider a 
pseudobind $\hole\pbnd$ of type
$(\obx\to\frm\ap\obx')\to\frj\ap\obx\to\frm\ap\obx'$. Then one can define
bind of $\frm$, denoted by $\hole\bnd$, by
\lawarray{
&k\bnd
&&\al{=}k\pbnd\hdl
\label{writer:bnd-pbndhdl}\sctag{Bnd-PBndRBnd}
}
(We use the half-star notation for bind of $\frm$ and leave the standard
notation $\hole^*$ for bind of the monad $\frj$. The term \emph{pseudobind}
was chosen after Steele~Jr.~\cite{DBLP:conf/popl/Steele94}; 
we will discuss Steele's work in
Subsect.~\ref{writer:steele}.) 
In the writer case, we can take
\lawarray{
&k\pbnd
&&\al{=}\lam{(a,w)}{\frf\ap(\rsect{w}{\cdot},\id)\ap(k\ap a)}\mbox{,}
\label{writer:pbnd-bifun}\sctag{PBnd-Bifun}
}
where 
the section notation of Haskell is used in the first argument of $\frf$,
which itself is the bifunctor obtained from $\frm$ by treating $\obw$ as its 
(first) parameter. (The type $\obw$ is still fixed. The domain of the
additional parameter of $\frf$ is the category consisting of a singleton
object~$\obw$ and its endomorphisms.)
So $k\pbnd$ takes a pair $(a,w)$, applies $k$ to $a$ and multiplies the log
of the computation by $w$ from the left.
We could not use $\phi(\id\times\rsect{w}{\cdot})$ instead as 
the exception monad transformer does not preserve the equality
$\frf(f,\id)=\phi(\id\times f)$. We will study~$\frf$ more in
Subsect.~\ref{writer:other}.

For $\hole\pbnd$, we use the following monad-like axioms:
\lawarray{
&k\pbnd\icp\eta
&&\al{=}k
\label{writer:pbnd-unitl}\sctag{PBnd-UnitL}\\
&(\rho\icp f)\pbnd
&&\al{=}\rho\icp f^*
\label{writer:pbnd-unitr}\sctag{PBnd-RPoint}\\
&l\bnd\icp k\pbnd
&&\al{=}(l\bnd\icp k)\pbnd
\label{writer:pbndbnd-assoc}\sctag{PBndBnd-Assoc}
}

But how to express the relative monad bind in terms of the bind of $\frm$? 
The functions $\shift$ and $\fuse$ studied in connection with mixmap can help. 
Firstly, we can define $\phi$ via $\hole\hdl$ by generalizing
\ref{writer:fun-hdl}:
\lawarray{
&\phi(g)
&&\al{=}(\rho\icp g)\hdl
\label{writer:rfun-hdl}\sctag{Mixmap-RBnd}
}
Then $\shift$ and $\fuse$ are expressible via $\phi$ by 
\ref{writer:shift-rfun} and \ref{writer:fuse-rfun}; 
hence these functions are definable in our
\qquot{two-story monad} framework. Now we achieve an
equivalent axiomatics in terms of monad $\frm$, $\shift$ and $\fuse$ 
if we add the following axioms, the first two of which are for defining 
$\hole\hdl$ and $\hole\pbnd$, to the union of monad axioms for $\frm$ 
and the previously seen axioms of $\shift$ and $\fuse$:
\lawarray{
&\shift\icp k\hdl
&&\al{=}(\shift\icp k)\bnd\icp\shift
\label{writer:shift-bndhom}\sctag{Shift-BndHom}\\
&k\pbnd
&&\al{=}k\bnd\icp\rho
\label{writer:pbnd-bnd}\sctag{PBnd-Bnd}\\
&\rho\bnd
&&\al{=}\fuse
\label{writer:bnd-rpoint}\sctag{Bnd-RPoint}
}
The previously established axiomatics for $\shift$ and $\fuse$ declares 
$\shift$ to be a homomorphism between the relative point and point, as well
as between mixmap and functor; \ref{writer:shift-bndhom} extends this 
pattern also to the third level. It enables to express $\hole\hdl$ via
$\hole\bnd$ as
\lawarray{
&k\hdl
&&\al{=}k\bnd\icp\shift
\label{writer:hdl-bnd}\sctag{RBnd-Bnd}
}
Indeed:
\prfarray{
&k\hdl\\
\al{=}&\since{\ref{writer:fuse-shift}, identity}\\
&\fuse\icp\shift\icp k\hdl\\
\al{=}&\since{\ref{writer:bnd-rpoint}, \ref{writer:shift-bndhom}}\\
&\rho\bnd\icp(\shift\icp k)\bnd\icp\shift\\
\al{=}&\since{monad}\\
&(\rho\bnd\icp\shift\icp k)\bnd\icp\shift\\
\al{=}&\since{\ref{writer:bnd-rpoint}}\\
&(\fuse\icp\shift\icp k)\bnd\icp\shift\\
\al{=}&\since{\ref{writer:fuse-shift}, identity}\\
&k\bnd\icp\shift
}
Note also that the laws imply $\rho$ being a monad morphism:
\ref{writer:point-rpoint} states preservation of unit, and
\ref{writer:pbnd-unitr} and \ref{writer:pbnd-bnd} together establish
preservation of bind.

Our main result of the current section 
is formalized by the following theorem, which again is correct for an 
arbitrary category:

\begin{theorem}\label{writer:level3}
Let $(\frj,\eta,\mu)$ be a monad, let $\frm$ be given as in
Theorem~\ref{writer:level2}, and let $\rho$, $\invok$, $\shift$, $\fuse$ and
$\phi$ be given with the same types as in Theorem~\ref{writer:level2}. 
Moreover, assume:
\lawarray{
&\hole\bnd
&&\al{\oftyp}(\obx\to\frm\ap\obx')\to\frm\ap\obx\to\frm\ap\obx'\\
&\hole\hdl
&&\al{\oftyp}(\frj\ap\obx\to\frm\ap\obx')\to\frm\ap\obx\to\frm\ap\obx'\\
&\hole\pbnd
&&\al{\oftyp}(\obx\to\frm\ap\obx')\to\frj\ap\obx\to\frm\ap\obx'
}
Then the set consisting of the laws
\ref{writer:hdl-unitl}, \ref{writer:hdl-id}, \ref{writer:hdl-assoc},
\ref{writer:pbnd-unitl}, \ref{writer:pbnd-unitr},
\ref{writer:pbndbnd-assoc}, \ref{writer:rfun-hdl}, \ref{writer:fun-hdl} 
(or: \ref{writer:fun-rfun}), \ref{writer:shift-rfun}, \ref{writer:fuse-rfun}, 
\ref{writer:point-rpoint} and \ref{writer:bnd-pbndhdl} is
equivalent to the axiomatics consisting of:
\begin{itemize}
\item Naturality of $\fuse$;
\item Three coherence laws of $\fuse$;
\item Three homomorphism laws of $\shift$
(\ref{writer:shift-rpointhom}, \ref{writer:shift-rfunhom} and
\ref{writer:shift-bndhom});
\item Four monad laws of $\frm$ (including the definition of the morphism
mapping of~$\frm$);
\item \ref{writer:bnd-rpoint} and \ref{writer:pbnd-bnd}.
\end{itemize}
\end{theorem}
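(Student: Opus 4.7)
The overall strategy is to route both directions through Theorem~\ref{writer:level2}, which already handles the level-2 correspondence between the $\phi$-formulation and the $\shift/\fuse$-formulation. The technical linchpin is the formula $k\hdl = k\bnd\icp\shift$ of equation~\ref{writer:hdl-bnd}, which together with \ref{writer:pbnd-bnd} factors the relative bind $\hdl$ and pseudobind $\pbnd$ through the monad bind $\bnd$ via $\shift$ and $\rho$ respectively, so that all laws about $\hdl$ and $\pbnd$ reduce to monad laws of $\bnd$. The paper's text below the theorem derives \ref{writer:hdl-bnd} from Axiomatization~2; the forward direction does not need it, the reverse direction crucially does.

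For the forward direction (Ax.~1 $\Rightarrow$ Ax.~2), I first derive the level-2 $\phi$-axioms \ref{writer:rpoint-rnat}, \ref{writer:rfun-id} and \ref{writer:rfun-comp} from the relative monad laws \ref{writer:hdl-unitl}, \ref{writer:hdl-id} and \ref{writer:hdl-assoc} via \ref{writer:rfun-hdl}; Theorem~\ref{writer:level2} then delivers naturality of $\fuse$, the three coherence laws, and the homomorphism laws \ref{writer:shift-rpointhom} and \ref{writer:shift-rfunhom}. The four monad laws of $\bnd$ follow by expanding $\bnd = \pbnd\hdl$ via \ref{writer:bnd-pbndhdl} and combining the pseudobind laws with \ref{writer:hdl-unitl}, \ref{writer:hdl-id} and \ref{writer:hdl-assoc}, using $\eta^*=\id$ for right unit and $(\eta\icp f)^* = \frj f$ for morphism mapping. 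Law~\ref{writer:pbnd-bnd} is immediate from \ref{writer:bnd-pbndhdl} and \ref{writer:hdl-unitl}; law~\ref{writer:bnd-rpoint} follows by taking $f=\id$ in \ref{writer:pbnd-unitr} (giving $\rho\pbnd = \rho\icp\mu$), then unfolding via \ref{writer:bnd-pbndhdl}, \ref{writer:rfun-hdl} and \ref{writer:fuse-rfun}. The delicate step is \ref{writer:shift-bndhom}, which I split as $\shift\icp k\hdl = (\shift\icp k)\hdl = (\shift\icp k)\bnd\icp\shift$ and prove each equality by unfolding $\shift = (\rho\icp\eta)\hdl$ and $\bnd = \pbnd\hdl$, collapsing with \ref{writer:hdl-assoc}, \ref{writer:hdl-unitl} and \ref{writer:pbnd-unitl}.

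For the reverse direction (Ax.~2 $\Rightarrow$ Ax.~1), Axiomatization~2 already provides the $\shift/\fuse$-side of Theorem~\ref{writer:level2} (the monad laws of $\frm$ give functoriality and naturality of $\invok$), so Theorem~\ref{writer:level2} yields the full level-2 $\phi$-axiomatics, including \ref{writer:point-rpoint}, \ref{writer:shift-rfun}, \ref{writer:fuse-rfun}, \ref{writer:fun-rfun} and the consequence \ref{writer:rfun-fuseshift}. The relative monad laws \ref{writer:hdl-unitl}, \ref{writer:hdl-id} and \ref{writer:hdl-assoc} then reduce via \ref{writer:hdl-bnd} to monad laws of $\bnd$ using \ref{writer:shift-rpointhom}, \ref{writer:bnd-rpoint}, \ref{writer:fuse-shift} and \ref{writer:shift-bndhom}; the pseudobind laws reduce similarly via \ref{writer:pbnd-bnd}, with \ref{writer:pbnd-unitr} additionally needing $\rho$ to be a monad morphism, which I verify by the chain $\rho\icp f^* = \fuse\icp\frm f\icp\rho = (\rho\icp f)\bnd\icp\rho = (\rho\icp f)\pbnd$ using \ref{writer:rpoint-rnat}, \ref{writer:fun-rfun}, \ref{writer:bnd-rpoint} and the monad laws. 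Finally, \ref{writer:rfun-hdl} and \ref{writer:fun-hdl} follow from \ref{writer:hdl-bnd} combined with \ref{writer:rfun-fuseshift}, and \ref{writer:bnd-pbndhdl} from the brief calculation $k\bnd = k\bnd\icp\fuse\icp\shift = (k\bnd\icp\rho)\bnd\icp\shift = k\pbnd\bnd\icp\shift = k\pbnd\hdl$. The main obstacle throughout is bookkeeping across the three interlocking bind-like operators $\bnd$, $\hdl$ and $\pbnd$; once the factorizations $\hdl = \bnd\icp\shift$ and $\pbnd = \bnd\icp\rho$ are in hand, every verification is routine equational rewriting with no subtle categorical obstruction.
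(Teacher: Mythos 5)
Your proof is correct, and it matches the route the paper intends: the paper itself gives no details beyond ``The proofs are straightforward,'' but the surrounding text explicitly derives the key factorization $k\hdl=k\bnd\icp\shift$ (equation \ref{writer:hdl-bnd}) and remarks that $\rho$ is a monad morphism, which are precisely the two pivots of your argument. Routing the level-2 content through Theorem~\ref{writer:level2} and reducing all three bind-like operators to $\hole\bnd$ via $\shift$ and $\rho$ is exactly the intended decomposition, and each of your individual reductions checks out.
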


The proofs are straightforward. In the light of the ability of expressing 
the operations used in this paper and the \ensuremath{\Conid{MonadWriter}} class methods in
terms of each other, we also have the following result:

\begin{theorem}
Let $\frm$ be any monad in the category $\Set$. Then: 
\begin{itemize}
\item The monad obtained by applying the writer monad transformer to~$\frm$,
with methods defined as in MTL, satisfies all laws mentioned in 
Theorem~\ref{writer:level3} and also \ref{writer:pbnd-bifun};
\item Applying the exception, reader, writer, and 
state monad transformers preserve the laws.
\end{itemize}
\end{theorem}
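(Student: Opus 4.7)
The proof splits into two parts matching the two bullets; in both we work with the second (more concrete) axiomatics of Theorem~\ref{writer:level3}, which that theorem tells us is equivalent to the first one together with \ref{writer:pbnd-bifun}.

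For the first bullet, put $\frm'\ap\obx = \frm\ap(\obx\times\obw)$ with MTL's definitions. Unfolding \ref{writer:shift-rfun} and \ref{writer:fuse-rfun} gives $\shift = \frm\ap(\lam{(a,w)}{((a,w),\one)})$, $\fuse = \frm\ap(\lam{((a,w'),w)}{(a,w\cdot w')})$, and $\phi(g) = \frm\ap g$ for any $g\oftyp\obx\times\obw\to\obx'\times\obw$. Each axiom then reduces to an equation between $\frm$-morphisms whose content is a monad law of $\frm$, the functoriality of $\frm$, the naturality of $\invok$, or a monoid law on $\obw$: \ref{writer:fuse-fuse} is associativity of the monoid; \ref{writer:fuse-funpoint} and \ref{writer:fuse-shift} are the monoid unit laws; \ref{writer:shift-bndhom} is the associativity of the $\frm$-bind combined with functoriality; and \ref{writer:pbnd-bifun} follows from \ref{writer:pbnd-bnd} together with the MTL definition of the lifted bind.

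For the second bullet we argue case by case over the four transformers, extending the propagation in the style of Table~\ref{reader:def} from $\rho$ to $\invok$, $\hole\bnd$, $\shift$ and $\fuse$ (reading off the lifts that MTL dictates for these operations in each case). For each axiom one expands both sides with these definitions and appeals to the corresponding axiom for the inner monad plus the classical monad laws of the transformed monad. The coherence laws of $\fuse$ and the three homomorphism laws of $\shift$ reduce to their inner counterparts once one observes that the propagated operations commute with the transformer's own lift and unit.

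The main obstacle is the state case: the bind of $\StateTof{\obs}\frm'$ threads $s$ through every subcomputation, so verifying \ref{writer:shift-bndhom} or \ref{writer:pbnd-bnd} amounts to checking that the propagated logging operation (still an inner-monad effect) commutes correctly with this threading. The exception case is also non-trivial because $\shift$ and $\fuse$ must be marshalled through the sum $\obe+\_$ by mapping under $\inr$ and using distributivity, and one must verify that the monad-morphism equations survive this marshalling. The writer case is mildly fiddly because two monoids then coexist in the product, so one must track which multiplication is being applied where. In all four cases the calculations are routine bookkeeping once the propagation is spelled out, with no new mathematical content beyond Theorem~\ref{writer:level3} and the axioms on the inner monad.
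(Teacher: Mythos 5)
Correct, and essentially the paper's own (implicit) approach: the paper states this theorem without any written proof, treating it as exactly the routine direct verification you sketch---check one of the two equivalent axiomatics of Theorem~\ref{writer:level3} on the writer-transformed monad (where $\phi(g)=\frm\ap g$ and everything collapses to inner monad laws, naturality of $\invok$, and monoid laws on $\obw$), then check preservation transformer by transformer. One small imprecision: \ref{writer:pbnd-bifun} is not covered by the equivalence of Theorem~\ref{writer:level3} (it belongs to neither axiom set there), but since you verify it separately from \ref{writer:pbnd-bnd} and the MTL definitions, nothing is lost.
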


\subsection{Connections to Steele's Pseudomonads}\label{writer:steele}

Our two-story monads are close to the pseudomonad towers studied by 
Steele~Jr.~\cite{DBLP:conf/popl/Steele94}. That classic paper aims to find ways to
join different monadic effects by \qquot{composing} the carrier monads of each
particular effect. As monads are not always behaving nicely under composition, 
it introduces pseudomonads that generalize monads by allowing the target
type of the function under bind to differ from the source type of the
function produced by bind (the target types of the function under bind and
of that produced by bind still coincide). 
Unit and bind of a pseudomonad are called 
pseudounit and pseudobind. In this sense, our operation
$\hole\pbnd$ along with the unit $\eta$ of monad~$\frj$ 
are pseudobind and pseudounit of a pseudomonad. The notion of 
monad itself as treated in~\cite{DBLP:conf/popl/Steele94} 
is wider than standard and subsumes also relative monads. 

That paper modifies the standard monad axioms to
be applicable to pseudomonads. Here are Steele's axioms, written 
in the language of our paper:
\lawarray{
&k\pbnd\icp\eta
&&\al{=}k
\label{writer:steele-unitl}\sctag{Steele-UnitL}\\
&(h\icp\eta)\pbnd
&&\al{=}h
\label{writer:steele-unitr}\sctag{Steele-UnitR}\\
&l\pbnd\icp k^*
&&\al{=}(l\pbnd\icp k)\pbnd
\label{writer:steele-assoc}\sctag{Steele-Assoc}
}
Of these, \ref{writer:steele-unitl} coincides with our axiom
\ref{writer:pbnd-unitl}. But the next axiom, \ref{writer:steele-unitr}, is
not valid in general. It implies that every function of type 
$\frj\ap\obx\to\frm\ap\obx'$ is a result of pseudobind---and 
in particular, every
function of type $\frj\ap\obx\to\frj\ap\obx'$ is a result of bind, which is
clearly not true. The last axiom, \ref{writer:steele-assoc}, is a
theorem in our axiomatics, provable as follows:
\prfarray{
&l\pbnd\icp k^*\\
\al{=}&\since{\ref{writer:pbnd-bnd}}\\
&l\bnd\icp\rho\icp k^*\\
\al{=}&\since{\ref{writer:pbnd-unitr}}\\
&l\bnd\icp(\rho\icp k)\pbnd\\
\al{=}&\since{\ref{writer:pbndbnd-assoc}}\\
&(l\bnd\icp\rho\icp k)\pbnd\\
\al{=}&\since{\ref{writer:pbnd-bnd}}\\
&(l\pbnd\icp k)\pbnd
}
We were not able to prove \ref{writer:pbndbnd-assoc} from
\ref{writer:steele-assoc}, so it seems that \ref{writer:steele-assoc} is
strictly weaker than \ref{writer:pbndbnd-assoc}. (This does not mean a flaw
in~\cite{DBLP:conf/popl/Steele94}. The weaker axiom might
be perfect for the purposes of that paper which aims to 
involve cases where the monad obtained by composition does not satisfy
associativity. In our axiomatics, associativity of monad $\frm$ is 
forced by the laws of its constituent pseudomonad and relative monad.)

\subsection{Some Corollaries and Non-Corollaries}\label{writer:other}

In this subsection, we are working in the category $\Set$. 
According to the class definition given at the beginning of
Sect.~\ref{writer}, one can express
\[
\tell=\rho\icp(\constof\alone\both\id)\mbox{.}
\] 
Using the theory developed above, any expression of the form 
$\tellof w\sequ t$ can be
rewritten as $(\constof t)\pbnd\ap(\alone,w)$. (Rewrite $\tellof w$ and 
$\sequ$ by their meaning and apply \ref{writer:pbnd-bnd}.) 
On the other hand, \ref{writer:pbnd-bifun} allows to conclude that
\lawarray{
&(\constof t)\pbnd\ap(\alone,w)
&&\al{=}\frf\ap(\rsect{w}{\cdot},\id)\ap t
\label{writer:bifun-pbnd}\sctag{Bifun-PBnd}
}
Therefore, bifunctor applications of the form
$\frf\ap(\rsect{w}{\cdot},\id)\ap t$ 
are equivalent to $\tellof w\sequ t$. Using \ref{writer:bifun-pbnd} 
as the definition of such bifunctor applications, we can prove that they 
satisfy the functor laws. For identity:
\prfarray{
&\id\\
\al{=}&\since{identity, constant}\\
&\lam{t}{\constof t\ap\alone}\\
\al{=}&\since{\ref{writer:pbnd-unitl}}\\
&\lam{t}{(\constof t)\pbnd\ap(\eta\ap\alone)}\\
\al{=}&\since{writer}\\
&\lam{t}{(\constof t)\pbnd\ap(\alone,\one)}\\
\al{=}&\since{\ref{writer:bifun-pbnd}, extensionality}\\
&\frf\ap(\rsect{\one}{\cdot},\id)\\
\al{=}&\since{monoid unit}\\
&\frf\ap(\id,\id)
}
For composition,
\prfarray{
&\frf\ap(\rsect{w'}{\cdot},\id)\icp\frf\ap(\rsect{w}{\cdot},\id)\\
\al{=}&\since{composition}\\
&\lam{t}{\frf\ap(\rsect{w'}{\cdot},\id)\ap(\frf\ap(\rsect{w}{\cdot},\id)\ap t)}\\
\al{=}&\since{\ref{writer:bifun-pbnd} twice}\\
&\lam{t}{(\constof((\constof t)\pbnd\ap(\alone,w)))\pbnd\ap(\alone,w')}\\
\al{=}&\since{constant, product, $\alone$}\\
&\lam{t}{((\constof t)\pbnd\icp(\id\both\constof w))\pbnd\ap(\alone,w')}\\
\al{=}&\since{\ref{writer:steele-assoc}}\\
&\lam{t}{((\constof t)\pbnd\icp(\id\both\constof w)^*)\ap(\alone,w')}\\
\al{=}&\since{composition}\\
&\lam{t}{(\constof t)\pbnd\ap((\id\both\constof w)^*\ap(\alone,w'))}\\
\al{=}&\since{writer}\\
&\lam{t}{(\constof t)\pbnd\ap((\id\times\rsect{w'}{\cdot})\ap((\id\both\constof w)\ap\alone))}\\
\al{=}&\since{product, constant}\\
&\lam{t}{(\constof t)\pbnd\ap(\alone,w'\cdot w)}\\
\al{=}&\since{\ref{writer:bifun-pbnd}, extensionality}\\
&\frf\ap(\rsect{w'\cdot w}{\cdot},\id)\\
\al{=}&\since{monoid associativity}\\
&\frf\ap(\rsect{w'}{\cdot}\icp\rsect{w}{\cdot},\id)
}
The latter implies the equation 
\[
\tellof w\sequ\tellof w'=\tellof(w\cdot w')
\]
(for proof, rewrite $\tellof w'=\tellof w'\sequ\invokof\alone$ 
and later the same for $w\cdot w'$).

We leave the proofs of the following two corollaries as an exercise:
\lawarray{
&\rho\icp(\id\times\rsect{w}{\cdot})
&&\al{=}\frf\ap(\rsect{w}{\cdot},\id)\icp\rho
\label{writer:rpoint-binat}\sctag{RPoint-Binat}\\
&\frf\ap(\rsect{w}{\cdot},\id)\icp k\bnd
&&\al{=}k\bnd\icp\frf\ap(\rsect{w}{\cdot},\id)
\label{writer:bifun-bnd-comm}\sctag{Bifun-Bnd-Comm}
}

Finally, there are laws that cannot be deduced from the developed theory but
are valid in all monads constructible by applying the writer
monad transformer to any monad and preserved by the exception, 
reader, writer and state monad transformers. For instance:
\begin{itemize}
\item \ref{writer:rpoint-binat} stays true after replacing 
$\rsect{w}{\cdot}$ with arbitrary $f\oftyp\obw\to\obw$; 
\item For any monoid endomorphism~$h$, 
mapping of the computation log by~$h$ is a monad homomorphism:
\lawarray{
&\frf\ap(h,\id)\icp\invok
&&\al{=}\invok
\label{writer:bifun-unithom}\sctag{Bifun-UnitHom}\\
&\frf\ap(h,\id)\icp k\bnd
&&\al{=}(\frf\ap(h,\id)\icp k)\bnd\icp\frf\ap(h,\id)
\label{writer:bifun-bndhom}\sctag{Bifun-BndHom}
}
\end{itemize}

\section{Stateful Computations}\label{state}

In MTL, the class of stateful monads is introduced as follows:
\begin{hscode}\SaveRestoreHook
\column{B}{@{}>{\hspre}l<{\hspost}@{}}%
\column{5}{@{}>{\hspre}l<{\hspost}@{}}%
\column{14}{@{}>{\hspre}l<{\hspost}@{}}%
\column{17}{@{}>{\hspre}l<{\hspost}@{}}%
\column{19}{@{}>{\hspre}l<{\hspost}@{}}%
\column{E}{@{}>{\hspre}l<{\hspost}@{}}%
\>[B]{}\mathbf{class}\;\Conid{Monad}\;\Varid{m}\Rightarrow \Conid{MonadState}\;\Varid{s}\;\Varid{m}\mid \Varid{m}\to \Varid{s}\;\mathbf{where}{}\<[E]%
\\
\>[B]{}\hsindent{5}{}\<[5]%
\>[5]{}\Varid{get}{}\<[14]%
\>[14]{}\mathbin{::}\Varid{m}\;\Varid{s}{}\<[E]%
\\
\>[B]{}\hsindent{5}{}\<[5]%
\>[5]{}\Varid{get}{}\<[14]%
\>[14]{}\mathrel{=}{}\<[17]%
\>[17]{}\Varid{state}\;(\lambda \Varid{s}\to (\Varid{s},\Varid{s})){}\<[E]%
\\[\blanklineskip]%
\>[B]{}\hsindent{5}{}\<[5]%
\>[5]{}\Varid{put}{}\<[14]%
\>[14]{}\mathbin{::}\Varid{s}\to \Varid{m}\;(){}\<[E]%
\\
\>[B]{}\hsindent{5}{}\<[5]%
\>[5]{}\Varid{put}\;\Varid{s}{}\<[14]%
\>[14]{}\mathrel{=}{}\<[17]%
\>[17]{}\Varid{state}\;(\lambda \anonymous \to ((),\Varid{s})){}\<[E]%
\\[\blanklineskip]%
\>[B]{}\hsindent{5}{}\<[5]%
\>[5]{}\Varid{state}{}\<[14]%
\>[14]{}\mathbin{::}(\Varid{s}\to (\Varid{a},\Varid{s}))\to \Varid{m}\;\Varid{a}{}\<[E]%
\\
\>[B]{}\hsindent{5}{}\<[5]%
\>[5]{}\Varid{state}\;\Varid{f}{}\<[14]%
\>[14]{}\mathrel{=}{}\<[17]%
\>[17]{}\mathbf{do}{}\<[E]%
\\
\>[17]{}\hsindent{2}{}\<[19]%
\>[19]{}\Varid{s}\leftarrow \Varid{get}{}\<[E]%
\\
\>[17]{}\hsindent{2}{}\<[19]%
\>[19]{}\mathbf{let}\mathord{\sim}(\Varid{a},\Varid{s'})\mathrel{=}\Varid{f}\;\Varid{s}{}\<[E]%
\\
\>[17]{}\hsindent{2}{}\<[19]%
\>[19]{}\Varid{put}\;\Varid{s'}{}\<[E]%
\\
\>[17]{}\hsindent{2}{}\<[19]%
\>[19]{}\Varid{return}\;\Varid{a}{}\<[E]%
\ColumnHook
\end{hscode}\resethooks
The method \ensuremath{\Varid{get}}, by intention, returns the current state
without modifying it, and the method \ensuremath{\Varid{put}} replaces the current state 
with the one given as argument while returning the trivial value \ensuremath{()}.
The method \ensuremath{\Varid{state}}, analogously to \ensuremath{\Varid{reader}} and \ensuremath{\Varid{writer}} seen
previously, embeds a stateful operation represented in the form of a pure
function in the monad. 

If we denote by $\frj$ the state monad
$\frj\ap\obx=\obs\to\obx\times\obs$, where $\obs$ is an arbitrary fixed
type, and by $\frm$ any monad of the \ensuremath{\Conid{MonadState}} class, then the method 
\ensuremath{\Varid{state}} has type $\frj\ap\obx\to\frm\ap\obx$ and is in principle the
point $\rho$ of the functor $\frm$ relatively pointed on $\frj$. 
Thus all methods of \ensuremath{\Conid{MonadState}} 
classify as first-level in the hierarchy defined
in Sect.~\ref{except} as $\gt$ and $\pt$ are special cases of $\rho$. 
Like in Sect.~\ref{writer}, we keep the second type fixed,
i.e., functors have only one argument.

Compared to the other classes of monads, there exists relatively much 
previous research considering some equational axiomatization of \ensuremath{\Conid{MonadState}}. 
Gibbons and Hinze~\cite{DBLP:conf/icfp/GibbonsH11} assume the following set
of axioms:
\lawarray{
&\ptof s\sequ\ptof s'
&&\al{=}\ptof s'
\label{state:put-put}\sctag{Put-Put}\\
&\ptof s\sequ\gt
&&\al{=}\ptof s\sequ\invokof s
\label{state:put-get}\sctag{Put-Get}\\
&\gt\bind\pt
&&\al{=}\invokof()
\label{state:get-put}\sctag{Get-Put}\\
&\gt\bind\lam{s}{\gt\bind k\ap s}
&&\al{=}\gt\bind\lam{s}{k\ap s\ap s}
\label{state:get-get}\sctag{Get-Get}
}
These axioms, which can now be called classic, 
are valid for all monads obtained by application of the
state monad transformer to any monad and subsequent applications of error,
reader, writer and state monad transformers. 
Almost all other work that we are aware of relies on the same axiomatics with 
irrelevant modifications. Exceptions are Harrison and 
Hook~\cite{DBLP:conf/csfw/HarrisonH05} and 
Harrison~\cite{DBLP:conf/aplas/Harrison06} that formalize their axiomatics
in terms of $\gt$ and $\updt$ rather than $\gt$ and $\pt$, where 
$\updt$ corresponds to the MTL function \ensuremath{\Varid{modify}} defined by
\begin{hscode}\SaveRestoreHook
\column{B}{@{}>{\hspre}l<{\hspost}@{}}%
\column{11}{@{}>{\hspre}l<{\hspost}@{}}%
\column{E}{@{}>{\hspre}l<{\hspost}@{}}%
\>[B]{}\Varid{modify}{}\<[11]%
\>[11]{}\mathbin{::}\Conid{MonadState}\;\Varid{s}\;\Varid{m}\Rightarrow (\Varid{s}\to \Varid{s})\to \Varid{m}\;(){}\<[E]%
\\
\>[B]{}\Varid{modify}\;\Varid{f}{}\<[11]%
\>[11]{}\mathrel{=}\Varid{state}\;(\lambda \Varid{s}\to ((),\Varid{f}\;\Varid{s})){}\<[E]%
\ColumnHook
\end{hscode}\resethooks
(The early papers on transformers 
\cite{DBLP:conf/popl/LiangHJ95,DBLP:conf/afp/Jones95,HuttonM1996} had a
singleton method \ensuremath{\Varid{update}} in class 
\ensuremath{\Conid{MonadState}}	; this method behaved
similarly to \ensuremath{\Varid{modify}} but returned the state.) 
The axiomatics used by 
\cite{DBLP:conf/csfw/HarrisonH05,DBLP:conf/aplas/Harrison06} seem to be 
weaker than
that of \cite{DBLP:conf/icfp/GibbonsH11} and our experience suggests that 
any set of laws for \ensuremath{\Varid{modify}} 
equivalent to the Gibbons-Hinze axiomatics of \ensuremath{\Varid{get}} and
\ensuremath{\Varid{put}} is probably less elegant.

Sometimes also the following unit law is included:
\lawarray{
&\gt\sequ t
&&\al{=}t
\label{state:get-unitr}\sctag{Get-UnitR}
}
The law \ref{state:get-unitr} follows from the axioms above. More
surprisingly, the last among the above axioms, 
\ref{state:get-get}, turns out to be implied by \ref{state:put-put}, 
\ref{state:put-get} and \ref{state:get-put}. Given the translations between
the methods of class \ensuremath{\Conid{MonadState}} in the class declaration, 
the set of axioms \ref{state:put-put}, \ref{state:put-get} and 
\ref{state:get-put} is equivalent to the statement that 
$\rho\oftyp\frj\ap\obx\to\frm\ap\obx$ is a monad morphism. 
All the facts listed in this paragraph were
noted by Li-yao Xia in the post~\cite{Libraries} (we have checked that these
claims are correct indeed). As we have found no research 
papers mentioning them, despite the numerous authors using the 
state monad axioms, it seems that these facts are not widely known 
to the Haskell research community.

\section{Conclusion}\label{conc}

We have investigated equational axiomatizations for the \ensuremath{\Conid{MonadError}},
\ensuremath{\Conid{MonadReader}}, \ensuremath{\Conid{MonadWriter}}, and \ensuremath{\Conid{MonadState}} type classes
defined in Haskell MTL, along with reviewing previous related work. 
For each class, we have (or the previous 
work referred to has) proposed at least two alternative axiomatics that are
proven to be equivalent. In the case of \ensuremath{\Conid{MonadError}} and
\ensuremath{\Conid{MonadReader}}, the proposed axiomatics assume a wider setting that 
goes beyond the limits imposed by the Haskell MTL. We think that 
this should not be considered as a shortcoming, as the opportunity to extend
the world can facilitate proving theorems or 
perhaps even prove more theorems about the usual Haskell world. For
instance, it is not easy to find a proof of the analogous 
to \ref{state:get-unitr} law
\lawarray{
&\ask\sequ t
&&\al{=}t
\label{reader:ask-unitr}\sctag{Ask-UnitR}
}
using only the Haskell world laws of \ensuremath{\Conid{MonadReader}} from
Subsect.~\ref{reader:reader} and the definition of $\ask$. 
When switching to the axiomatics that involves $\apply$ and $\abstr$, 
proving \ref{reader:ask-unitr} becomes straightforward.

One more contribution of this work is a classification of all methods
of the four MTL classes into three levels (points, mixmaps and handles)
which have similar categorical interpretation for all classes and based of
which axiomatics of different classes can be compared. 
For instance, a lot of laws are common to the axiomatics of 
\ensuremath{\Conid{MonadError}} studied in our previous work~\cite{DBLP:conf/ictac/Nestra19} 
and the axiomatics of \ensuremath{\Conid{MonadWriter}} considered in this paper; some of the
common laws may look different. 
One can observe that unit and join of the exception monad 
$\Except\ap\obx=\obe+\obx$ are $\inr$ and $\inl\either\id$,
respectively; so one could define $\shift=\phi(\inr)$ and
$\fuse=\phi(\inl\either\id)$ for exceptions like we did for writer effects. 
In~\cite{DBLP:conf/ictac/Nestra19}, $\phi(\inl\either\id)$ was denoted
$\fusel$, and $\phi(\inr)$ is equivalent to 
$\rho\cch\icp\frf\ap(\inr\icp\inl,\inr)$ 
in the axiomatics of~\cite{DBLP:conf/ictac/Nestra19}. Hence our two-story
monad laws \ref{writer:fuse-funpoint} and \ref{writer:hdl-bnd} 
occurred in~\cite{DBLP:conf/ictac/Nestra19} in a form that did not help to 
recognize them as phenomenons of such an abstract level. 
We hope that this uniform view of different effects may conduce to deeper 
understanding of the MTL class methods. 

The paper might also inspire future development of MTL. 
Using bifunctors rather than monads or replacing MTL methods
with functions more convenient in the theory is not meant 
to suggest making the corresponding changes in Haskell. 
For instance, the method \ensuremath{\Varid{listen}} is likely to be more useful in practice 
than our $\shift$ would be, and we do not know a way to redefine all MTL 
classes simultaneously to apply to binary type constructors instead of 
unary ones without severe penalties in usability.
(Our earlier work~\cite{DBLP:conf/ictac/Nestra18} makes steps towards the 
latter, with the aim of increasing the expressive power of 
exception handling in the applicative style, but 
the only type that are added to monads as an extra parameter 
is the type of exceptions.)
However, adding some new methods to the existing
classes would be reasonable. In particular, the class \ensuremath{\Conid{MonadWriter}}
would benefit from including the general mixmap as class method, as it would
provide the shortest way to define various functions of the mixmap level
one might desire besides the existing \ensuremath{\Varid{listen}} and \ensuremath{\Varid{pass}} methods. 
The same addition could be useful in class \ensuremath{\Conid{MonadError}} 
(with fixed error type).

%
\bibliographystyle{ACM-Reference-Format}
\bibliography{nestra}

\appendix

\end{document}